\newtheorem{propo}{Proposition}
\DeclareMathOperator*{\argmax}{arg\,max}
\newcommand*\diff{\mathop{}\!\mathrm{d}}
\DeclarePairedDelimiter{\floor}{\lfloor}{\rfloor}
\DeclareMathOperator{\LL}{L_2}
\begin{document}

\title{Domain-Informed Spline Interpolation}

\author{Hamid~Behjat,~\IEEEmembership{Member,~IEEE}, Zafer~Do\u{g}an,~\IEEEmembership{Member,~IEEE}, Dimitri~Van~De~Ville,~\IEEEmembership{Senior~Member,~IEEE}, and Leif~S\"ornmo,~\IEEEmembership{Fellow,~IEEE}
\thanks{
This work was supported by the Swedish Research Council under Grant 2009-4584, and in part by the Swiss National Science Foundation under grant P2ELP2-165160.
}
\thanks{
H. Behjat and L. S\"ornmo are with the Department of Biomedical Engineering, Lund University, Sweden (e-mail: hamid.behjat@bme.lth.se; leif.sornmo@bme.lth.se).
} 
\thanks{
Z. Do\u{g}an is with the School of Engineering and Applied Sciences, Harvard University, USA (e-mail: zaferdogan@seas.harvard.edu ).
} 
\thanks{
D. Van De Ville is with the \'Ecole Polytechnique F\'ed\'erale de Lausanne, Switzerland (e-mail: dimitri.vandeville@epfl.ch) and the Department of Radiology and Medical Informatics, University of Geneva.
}
\thanks{1053-587X (c) 2019 IEEE --- This article has been accepted for publication in IEEE Transactions on Signal Processing. Personal use is permitted, but republication/redistribution requires IEEE permission. Link to article: https://ieeexplore.ieee.org/document/8734789 --- Citation info: DOI 10.1109/TSP.2019.2922154, IEEE Transactions on Signal Processing.
}
}


\maketitle

\begin{abstract}
Standard interpolation techniques are implicitly based on the assumption that the signal lies on a single homogeneous domain. In contrast, many naturally occurring signals lie on an inhomogeneous domain, such as brain activity associated to different brain tissue. We propose an interpolation method that instead exploits prior information about domain inhomogeneity, characterized by different, potentially overlapping, subdomains. As proof of concept, the focus is put on extending conventional shift-invariant B-spline interpolation. Given a known inhomogeneous domain, B-spline interpolation of a given order is extended to a domain-informed, shift-variant interpolation. This is done by constructing a domain-informed generating basis that satisfies stability properties. We illustrate example constructions of domain-informed generating basis, and show their property in increasing the coherence between the generating basis and the given inhomogeneous domain. By advantageously exploiting domain knowledge, we demonstrate the benefit of domain-informed interpolation over standard B-spline interpolation through Monte Carlo simulations across a range of B-spline orders. We also demonstrate the feasibility of domain-informed interpolation in a neuroimaging application where the domain information is available by a complementary image contrast. The results show the benefit of incorporating domain knowledge so that an interpolant consistent to the anatomy of the brain is obtained. 

\end{abstract}

\begin{IEEEkeywords}
Sampling, interpolation, context-based interpolation, B-splines, multi-modal image interpolation.
\end{IEEEkeywords}

\IEEEpeerreviewmaketitle

\section{Introduction}
\IEEEPARstart{I}{nterpolation} has been extensively studied in various settings. The main frameworks are based on concepts such as smoothness for spline-generating spaces~\cite{Unser1999}, underlying Gaussian distributions for ``kriging''~\cite{Kriging1990}, and spatial relationship for inverse-distance-weighted interpolation~\cite{Lu2008}. Yet, while advanced concepts have been developed for describing these signal spaces, the underlying domain is always assumed to be homogeneous. Here, we consider a different scenario in which signals are sampled over a known ``inhomogeneous'' domain; i.e., a domain characterized by a set of subdomains with their description available as supplementary data. In a sub-catagory of super-resolution image processing techniques, the interpolation phase is adapted such that the scheme becomes close to the interpolation scenario that we consider here. In particular, information from a high resolution signal is exploited to enhance interpolation on a set of samples acquired at low resolution. For example, an iterative, patch-based non-local reconstruction scheme was introduced in \cite{Manjon2010b}; the use of inter-modality priors to regularize the similarity between the up-sampled image and a secondary high resolution image was proposed in \cite{Rousseau2010}; a non-local means feature-based technique that uses structural information of a high resolution image with a different contrast was presented in \cite{JafariKhouzani2014}; sparsity promoting priors using overcomplete dictionaries learned from the data were exploited in \cite{Rueda2013}. Yet, such schemes do not fall within the problem that we formulate in this paper. The low resolution data samples and the supplementary high resolution information considered in these proposals are both of the same nature and, in essence, both describe the \textit{signal}. In the scheme proposed in this article, the supplementary high resolution information instead describes the {\it domain} of the signal, which has a completely different temporal/spatial characteristic than that of the signal (samples) defined on the domain. Moreover, rather than being a learning scheme, our proposal is formulated as a shift-variant extension of conventional shift-invariant interpolation schemes \cite{Unser1993,Unser2000}, such as linear, cubic or higher order B-spline interpolations. 

\subsection*{Problem Formulation}
\label{sec:probform}
Assume that the following set of information is given:
\begin{enumerate}
\item Domain knowledge described by a set of non-negative subdomain functions
\begin{align}
\label{eq:subdomains}
\mathcal{D}:\bigg \{d_j \in \text{L}_{2}, \: j \in &\mathcal{J}:\{1,\ldots,J\} \bigg \}, 
\end{align}
which form a partition of unity as
\begin{equation}
\label{eq:pouDomain}
\sum_{j=1}^J d_j(x) = 1, \quad \forall x \in \mathbb{R}.
\end{equation}

\subsubsection*{Remark} We denote a segment $[a,b], a,b \in \mathbb{R}$, of the domain as \textit{homogeneous} if there exists an $l\in\mathcal{J}$ such that 
\begin{align}
\label{eq:homoD_1}
d_{l}(x) & = 1, \quad \forall x \in [a,b],
\end{align}
and \textit{inhomogeneous} if (\ref{eq:homoD_1}) does not hold for any $l\in \mathcal{J}$. 

\item
A uniform sequence of samples $s[k]$ taken from a continuous function $s(x)$ as 
\begin{align}
\label{eq:samples}
s[k]  & = \left \langle s(x), \delta(x-kT) \right  \rangle,
\end{align}
where $T \in \mathbb{R}^{+}$ denotes the sampling step. 
\end{enumerate}

\noindent
The objective is to recover $s(x)$ using the given set of samples $s[k]$. The approach is to extend any conventional interpolation method that employs a shift-invariant basis to an interpolation that employs a shift-variant basis, such that prior information on the domain inhomogenuity is accommodated. In particular, consider a compactly-supported generator $\varphi(x)$ (i.e., $\varphi(x)=0$, $\forall |x| \ge \: \Delta^{(\varphi)} \in \mathbb{R}^{+}$) of a shift-invariant space 
\begin{equation}
\label{eq:representation0}
\mathcal{V}_{\varphi} = \bigg\{ \tilde{s}_{T}(x) = \sum_{k\in \mathbb{Z}} {c}[k] \cdot \varphi\left(\frac{x}{T}-k\right) : {c} \in \ell_{2} \bigg\}.
\end{equation}
The generating function $\varphi(x)$ can be any of the compact-support kernels used in standard interpolation. In the presence of domain inhomogeneity, the idea is to transform $\varphi(\cdot-k)$ into $\varphi_{k}(\cdot-k)$: a tailored version of $\varphi(\cdot-k)$ whose definition is based on the domain structure in the adjacency of $k$. With this construction, a shift-variant, domain-informed space
\begin{equation}
\label{eq:representation}
\mathcal{V}^{(\mathcal{D})}_{\varphi} = \bigg\{ \tilde{s}_{T}(x) = \sum_{k\in \mathbb{Z}} {c}[k] \cdot \varphi_{k}\left(\frac{x}{T}-k\right) : {c} \in \ell_{2} \bigg\},
\end{equation}
is obtained. \\

\noindent
In this article, we focus on formulating the domain-informed interpolation theory for B-spline generating functions \cite{Unser1999}.\\ 

Different application areas can be envisioned for domain-informed interpolation. Earth sciences is one application area, where a spatially continuous representation of earth surface parameters, such as precipitation, land vegetation and atmospheric methane is desired to be computed from a discrete set of rain gauge measurements \cite{Borga1997,Lu2008}, fossil pollen measurements \cite{Gaillard2010,Pirzamanbein2014} and satellite estimates of methane \cite{Frankenberg2005,Keppler2006}, respectively. In these scenarios, the well-defined geographical structure of the earth, anthropogenic land-cover models \cite{Kaplan2009} and geophysical models of the earth's surrounding atmosphere may be exploited as descriptors of the inhomogeneous domain to improve the standard approach to interpolation. 

Neuroimaging is another area where domain-informed interpolation can be beneficial. In brain studies using functional magnetic resonance imaging (fMRI), a sequence of whole-brain brain functional data is acquired at relatively low spatial resolution to enable tracking brain function at high temporal resolution. An anatomical MRI scan is also commonly acquired, providing information about the convoluted brain tissue delineating gray and white matter, each of which have different functional properties \cite{Logothetis}. Unlike functional data, anatomical data can be recorded at high spatial resolution, with resolutions three- to fourfold higher than the functional data. Hence, the goal would be to exploit the richness of anatomical data to define the domain of the acquired functional data and, in turn, to improve the quality of interpolation/resampling of the data\cite{Brett2001, Andrade2001, Ashburner2007}. In this article, we will demonstrate the feasibility of domain-informed interpolation in providing a high resolution representation of brain functional data such that the representation is consistent with the underlying brain anatomy. \\

The article is organized as follows. Section~II defines the properties of domain-informed generating basis. Section~III introduces the scheme for designing a domain-informed B-spline generating basis, and presents an illustrative example construction of such a basis. Section~IV presents domain-informed B-spline interpolation. In Section~V, we show the benefit of domain-informed interpolation over standard B-spline interpolation through Monte Carlo simulations across a range of B-spline orders. In Section~VI, we conclude by demonstrating the feasibility of domain-informed interpolation in a neuroimaging application.

\section{Domain-Informed Generating Basis}
\label{sec:diib}
Let $\text{L}_{2}$ denote the Hilbert space of all continuous, real-valued functions that are square integrable in Lebesgue's sense, with the $\text{L}_{2}$ inner-product defined as 
\begin{equation}
\label{eq:L2innerproduct}
\forall f,g \in \text{L}_{2}, \quad  \left  \langle f,g  \right  \rangle_{\text{L}_{2}} = \int_{-\infty}^{+\infty} f(x) g(x) \diff x < \infty,
\end{equation}
and $\text{L}_{2}$-norm is defined for all $f \in \text{L}_{2}$ as $\| f\|_{\text{L}_{2}}^{2} =  \left  \langle f,f  \right  \rangle_{\text{L}_{2}}$. Let $\ell_{2}$ denote the Hilbert space of all discrete signals that are square summable, with the $\ell_{2}$ inner product defined as 
\begin{equation}
\label{eq:l2innerproduct}
\forall p,q \in \ell_{2}, \quad \left  \langle p,q  \right  \rangle_{\ell_{2}} = \sum_{k \in \mathbb{Z}} p[k] q[k]  < \infty,
\end{equation}
and the $\ell_{2}$-norm is defined for all $ p \in \ell_{2}$ as $\| p \|_{\ell_{2}}^{2} = \left  \langle p,p  \right  \rangle_{\ell_{2}}$. In the following, we use the convention $ \langle \cdot,\cdot \rangle = \langle \cdot,\cdot \rangle_{\ell_{2}}$ and $  \| \cdot \| = \| \cdot \|_{\ell_{2}}$ to simplify the notation.

For a given compactly-supported generator $\varphi(x)$, let $\Phi^{\ast} = \{\varphi(x/T-k)\}_{k\in\mathbb{Z}}$ denote the standard shift-invariant basis of $\varphi(x)$ and $\Phi = \{\varphi_{k}(x/T-k)\}_{k\in\mathbb{Z}}$ denote a shift-variant basis of $\varphi(x)$; i.e., for all $k \in \mathbb{Z}$,  $\varphi_{k}(\cdot-k)$ denotes a tailored version of $\varphi(\cdot-k)$, which may be identical to $\varphi(\cdot-k)$ for a subset of $k\in \mathbb{Z}$. 

\if{false}
\begin{align}
\varepsilon_{\varphi}(T) & =  \| \tilde{s}_{T}(x) - s(x)\|_{\LL}, \quad  \tilde{s}_{T}(x) \in \mathcal{V}_{\varphi} 
\\
\varepsilon^{(\mathcal{D})}_{\varphi}(T) & =  \| \tilde{s}_{T}(x) - s(x)\|_{\LL}, \quad  \tilde{s}_{T}(x) \in \mathcal{V}^{(\mathcal{D})}_{\varphi}, 
\\
\varepsilon_{\mathcal{V}_{\varphi}^{(\mathcal{D})}}(T) & =  \| \tilde{s}_{T}(x) - s(x)\|_{\LL}, \quad  \tilde{s}_{T}(x) \in \mathcal{V}^{(\mathcal{D})}_{\varphi}, 
\\
\varepsilon_{\mathcal{V}}(T) & =  \| \tilde{s}_{T}(x) - s(x)\|_{\LL}, \quad  \tilde{s}_{T}(x) \in \mathcal{V}, 
\end{align}
where we have distinguished between the error terms associated to shift-invarinat and shift-variant basis of a generating function $\varphi$. 
\fi

\subsubsection*{Definition (Domain-Informed Generating Basis)}
For a given domain definition as in (\ref{eq:subdomains}), a basis $\Phi$ forms a domain-informed generating basis if and only if the following three conditions are met:
\begin{enumerate}
\item
$\Phi$ forms a Riesz basis, which is ensured if there exists constants $0 < A \le B< \infty$ such that \cite{ChristensenBook} 
\begin{align}
\label{eq:rieszbasis}
\forall  c \in \ell_{2}, \quad 0 & \le A \|c\|_{\ell_{2}}^{2} \le \bigg \| \sum_{k\in \mathbb{Z}}    c[k] \varphi_{k}(\frac{x}{T}-k) \bigg \|^{2}_{L_{2}} 
\nonumber
\\ 
& \le B \|c \|_{\ell_{2}}^{2} < \infty.
\end{align} 
This condition in necessary in order for (\ref{eq:representation}) to be a stable, unambiguous representation model.

\item
$\Phi$ forms a partition of unity, i.e., 
\begin{equation}
\label{eq:pouBasis}
\forall x \in \mathbb{R}, \quad \sum_{k\in \mathbb{Z}} \varphi_{k}(\frac{x}{T}-k) = 1. 
\end{equation}
This condition is necessary in order to have the approximation error vanish, see \cite[Appendix B]{Unser2000} for proof.

\item
Any element of $\Phi$ whose compact support lies entirely within a homogeneous segment of the domain becomes identical to its corresponding element in $\Phi^{\ast}$; i.e., $\varphi_{k}(x/T-k)=\varphi(x/T-k)$ if the domain is homogeneous at interval $[(k-\Delta^{(\varphi)})T, (k+\Delta^{(\varphi)})T]$.\\

\if{0}
for all $k \in \mathbb{Z}$, if there exits an $l \in \mathcal{J}$ such that
\begin{align}
d_{l}(x) &=1, \quad &\vert x-kT\vert \le \Delta^{(\varphi)}T,
\\
\{d_{j}(x)&=0\}_{j\in \mathcal{J}\setminus l}, \quad &\vert x-kT\vert \le \Delta^{(\varphi)}T, 
\end{align}
then $\varphi_{k}(x-k)=\varphi(x-k)$.\\
\fi

\if 0
\item 
$\Phi$ exhibits a greater domain-basis coherence than $\Phi^{\ast}$; i.e., 
\begin{equation}
\mathcal{C}_{\Phi} \ge \mathcal{C}_{\Phi^{\ast}}. 
\end{equation}
\fi

\end{enumerate}

\section{Domain-Informed B-spline Generating Basis} 
\label{sec:dibsib}

This section presents the construction of domain-informed generating bases using a B-spline generator function $\beta^{(n)}(x)$ of a desired order $n$. The construction framework is also applicable to alternative generating functions other than B-splines.

\subsection{B-spline Generating Basis}
The central B-spline $\beta^{(n)}(x)$ is obtained recursively as 
\begin{equation}
\label{eq:bsplines}
\beta^{(n)}(x) = (\beta^{(0)} \ast \beta^{(n-1)})(x),
\end{equation} 
where $(\cdot \ast \cdot)(x)$ denotes continuous-domain convolution and 
\begin{equation}
\beta^{(0)}(x) = 
\begin{cases}
1, \quad -\frac{1}{2}<x<\frac{1}{2}
\\
\frac{1}{2}, \quad |x| = \frac{1}{2}
\\
0, \quad \text{otherwise}.
\end{cases}
\end{equation}
\noindent 
$\beta^{(n)}(x)$ is supported in the interval $[ -\Delta^{(n)},\Delta^{(n)} ] $, where $\Delta^{(n)} = (n+1)/2$. We define $\Delta^{(n)}$-neighbourhood sets in the vicinity of each point $x$, denoted $\Delta^{(n)}_{x}$, as 
\begin{equation}
\label{eq:Delta}
\Delta^{(n)}_{x}: \Big \{k \in \mathbb{Z} \Bigm \vert \left |\frac{x}{T}-k \right | < \Delta^{(n)} \Big \};
\end{equation}
the set specifies the indices of sample points that fall within the $\Delta^{(n)}$-neighbourhood of a given point $x$. The scaled, integer-shifted set $\{ \beta^{(n)}(x/T-k)\}_{k \in \mathbb{Z}}$ forms a B-spline generating basis spanning the space of piecewise polynomial functions of order $n-1$. 

\subsection{Domain-Informed B-spline Generating Basis}
We construct a domain-informed B-spline associated to each sample point as the superposition of two B-spline based kernels as given in the following definition. \\

\subsubsection*{Definition (Domain-Informed B-splines)}
For a given set of subdomain functions (\ref{eq:subdomains}), a set of samples $s[k], k\in\mathbb{Z}$ and a B-spline generating function $\beta^{(n)}(x)$, a domain-informed B-spline associated to each sample point $k\in\mathbb{Z}$ can be defined as   
\begin{align}
\label{eq:displines}
\beta^{(n)}_{k}(x) 
& = 
\begin{cases}
\dot{\beta}^{(n)}_{k}(x) + {\ddot{\beta}}^{(n)}_{k}(x), & \quad |x| \le \Delta^{(n)}
\\
0, & \quad \text{otherwise},
\end{cases}
\end{align}
where $\dot{\beta}^{(n)}_{k}(x)$ denotes a dominant kernel that characterizes the overall shape and $\ddot{\beta}^{(n)}_{k}(x)$ a residual kernel that tunes the shape to the given domain knowledge in the adjacency of sample point $k$. \\

\noindent
In the following, we formulate the construction of $\dot{\beta}^{(n)}_{k}(x)$ and ${\ddot{\beta}}^{(n)}_{k}(x)$. Firstly, using the subdomain functions, a set of subdomain-informed B-splines associated to each $k \in \mathbb{Z}$ and $j\in\mathcal{J}$ are obtained as     
\begin{align}
\label{eq:sdiSPL}
\beta^{(n)}_{k,j}(x) 
&= 
\begin{cases}
d_{j}((x+k)T) \beta^{(n)}(x), \quad  & |x| \le \Delta^{(n)}
\\ 
0, \quad & \text{otherwise}. 
\end{cases}
\end{align}
It is straightforward to verify that subdomain-informed B-splines satisfy
\begin{equation}
\label{eq:sisProp1}
\sum_{k\in \mathbb{Z}} \beta^{(n)}_{k,j}(\frac{x}{T}-k)  = d_{j}(x), \quad \forall x \in \mathbb{R}, 
\end{equation}
and 
\begin{equation}
\label{eq:sisProp2}
\sum_{j\in\mathcal{J}} \beta^{(n)}_{k,j}(x)  = \beta^{(n)}(x), \quad \forall k \in \mathbb{Z}.
\end{equation}
Let $\mathcal{I}_{k}$ denote the set of indices of the subdomains maximally associated to sample $k$, i.e.,  
\begin{equation}
\mathcal{I}_{k}:\left \{i\in \mathcal{J} \Big \vert i = \argmax_{j\in \mathcal{J}} \left\{ d_{j}(kT) \right \}\right \},
\end{equation}
and $\mathcal{R}_{k}$ denote the set of the remaining subdomain indices, i.e., 
\begin{equation}
\mathcal{R}_{k} = \mathcal{J} \setminus \mathcal{I}_{k},
\end{equation}
where $\setminus$ denotes set difference. $\vert \mathcal{I}_{k} \vert > 1$ infers that more than one subdomain is maximally associated with sample point $k$. The dominant kernel $\dot{\beta}^{(n)}_{k}(x)$ is defined as
\begin{equation}
\dot{\beta}^{(n)}_{k}(x) = \sum_{i\in\mathcal{I}_{k}} \beta^{(n)}_{k,i}(x). 
\label{eq:dominantkernel}
\end{equation}
By cumulating the non-dominant subdomain-informed B-splines, a residual function $\Omega(x): x\in\mathbb{R} \rightarrow [0,1]$ can be defined as
\begin{align}
\label{eq:residualFunction}
\Omega(x) & = \sum_{l\in{\Delta^{(n)}_{x}}} \sum_{j\in \mathcal{R}_{l}}\beta^{(n)}_{l,j}(\frac{x}{T}-l)
\\ 
& = 1 -  \sum_{l\in{\Delta^{(n)}_{x}}} \dot{\beta}^{(n)}_{l}(\frac{x}{T}-l).
\label{eq:residualFunction2}
\end{align}
In particular, $\Omega(x)=0$ at any homogeneous part of the domain. Using $\Omega(x)$, the residual kernel $\ddot{\beta}^{(n)}_{k}(x)$ is defined as 
\begin{equation}
\label{eq:minorSPLs}
{\ddot{\beta}}^{(n)}_{k}(x) = 
\begin{cases}
\theta_{k}(x) \cdot \Omega((x+k)T), & \vert x \vert \le \Delta^{(n)}
\\
0, & \text{otherwise},
\end{cases}
\end{equation}
where $\theta_{k}(x)$, in its simplest form, equals a weighting function $w_{k}(x)$ defined as  
\begin{align}
\label{eq:wkx}
w_{k}(x) & = 
\begin{cases}
\displaystyle{\frac{\dot{\beta}^{(n)}_{k}(x)}{\sum_{l\in{\Delta^{(n)}_{x_{k}T}}} \dot{\beta}^{(n)}_{l}(x_{k}-l)}}, \quad & |x|\le \Delta^{(n)} 
\\
0, \quad & \text{otherwise},
\end{cases}
\end{align}
where $x_{k} = x+k$. The values of $\{d_{j}(x)\}_{j\in\mathcal{J}}$ are incorporated in $\dot{\beta}^{(n)}_{k}(x)$ and thus implicitly reflected in $w_{k}(x)$. To control changes in ${\ddot{\beta}}^{(n)}_{k}(x)$ relative to changes in the subdomains, $\theta_{k}(x)$ can be adaptively defined as 
\begin{equation}
\label{eq:theta}
\theta_{k}(x) := \displaystyle{\frac{\Theta\left(w_{k}(x)\right )}{\sum_{l\in{\Delta^{(n)}_{x_{k}T}}} \Theta\left (w_{l}(x_{k}-l)\right)}} \in [0,1],
\end{equation}
where $\Theta(\cdot): [0,1] \rightarrow [0,1]$ denotes a desired smooth, monotonically increasing function satisfying
$\Theta(x) = x$ for  $x\in\{0,0.5,1\}$, $\Theta(x)\le x$ for $x<0.5$ and  $\Theta(x)\ge x$ for $x>0.5$. Note that by setting $\Theta(x)=x$, we have $\theta_{k}(x) = w_{k}(x)$ since $\sum_{l\in{\Delta^{(n)}_{x_{k}T}}} w_{l}(x_{k}-l) = 1$. 

\begin{propo}(Domain-Informed B-spline Generating Basis) 
For a given sequence of samples $\{s[k]\}_{k \in \mathbb{Z}}$ as in (\ref{eq:samples}), a given set of subdomain functions as in (\ref{eq:subdomains}) and a given B-spline generating function $\beta^{(n)}(x)$, the set of functions $\{ \beta^{(n)}_{k}(x/T-k)\}_{k \in \mathbb{Z}}$ form a domain-informed generating basis, satisfying the three properties of a domain-informed generating basis.
\end{propo}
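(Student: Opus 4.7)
The plan is to verify the three defining conditions of a domain-informed generating basis in turn, dispensing with the two structural ones first and leaving the Riesz bound for last.

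For condition~(3), the reduction on homogeneous segments, I would argue as follows. If $[(k-\Delta^{(n)})T,(k+\Delta^{(n)})T]$ lies in a single subdomain, say $d_{l_{0}}\equiv 1$ there, then the partition of unity (\ref{eq:pouDomain}) forces every other $d_{j}$ to vanish on this interval, and (\ref{eq:sdiSPL}) collapses to $\beta^{(n)}_{k,l_{0}}=\beta^{(n)}$ and $\beta^{(n)}_{k,j}\equiv 0$ for $j\ne l_{0}$. Since $d_{l_{0}}(kT)=1$, one has $\mathcal{I}_{k}=\{l_{0}\}$ and hence $\dot{\beta}^{(n)}_{k}=\beta^{(n)}$ by (\ref{eq:dominantkernel}). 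To kill the residual, I would use the form (\ref{eq:residualFunction2}) of $\Omega$ together with the same reduction applied at every neighbouring sample whose dominant kernel reaches the interval, so that $\sum_{l}\dot{\beta}^{(n)}_{l}(\cdot/T-l)$ coincides with the classical B-spline partition of unity (identically $1$) on the interval, yielding $\Omega\equiv 0$ there. Consequently $\ddot{\beta}^{(n)}_{k}\equiv 0$, and $\beta^{(n)}_{k}=\beta^{(n)}$.

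Condition~(2) (partition of unity) I would verify by direct computation. Splitting $\beta^{(n)}_{k}=\dot{\beta}^{(n)}_{k}+\ddot{\beta}^{(n)}_{k}$ and using $\ddot{\beta}^{(n)}_{k}(x)=\theta_{k}(x)\,\Omega((x+k)T)$ gives $\sum_{k}\beta^{(n)}_{k}(x/T-k)=\sum_{k}\dot{\beta}^{(n)}_{k}(x/T-k)+\Omega(x)\sum_{k}\theta_{k}(x/T-k)$. The first sum equals $1-\Omega(x)$ by (\ref{eq:residualFunction2}), and the second sum equals $1$ at every $x$ by the normalization built into (\ref{eq:theta}) (restricted automatically to $k\in\Delta^{(n)}_{x}$ by compact support); adding yields the required identity.

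The Riesz basis property, condition~(1), is where I expect the real work. The upper bound $B$ should follow from the pointwise estimates $0\le\dot{\beta}^{(n)}_{k}\le\beta^{(n)}$ (a consequence of (\ref{eq:sisProp2}) and nonnegativity of the $d_{j}$) together with $0\le\ddot{\beta}^{(n)}_{k}\le 1$ on $[-\Delta^{(n)},\Delta^{(n)}]$, the bounded overlap (at most $n+1$ shifts are active at any point), and a Cauchy--Schwarz argument. The lower bound $A$ is the main obstacle. My first attempt would be a perturbation argument around the classical Riesz bound for $\{\beta^{(n)}(\cdot/T-k)\}_{k}$: by the already-established condition~(3), each $\beta^{(n)}_{k}$ coincides with $\beta^{(n)}$ away from the transition zones of $\mathcal{D}$, and the pointwise deviation is controlled by $\Omega$, so the associated Gram operator is a bounded perturbation of the classical Gram operator and its positivity is inherited provided the perturbation is strict. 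A more hands-on alternative is a patch-wise argument: on each unit interval, bound the local Gram matrix of the finitely many overlapping $\beta^{(n)}_{k}$'s from below, exploiting strict positivity of $\dot{\beta}^{(n)}_{k}$ on its support and the partition of unity already proved. The crux of either route is to rule out degenerate configurations in which the shift-variant modification makes some $\beta^{(n)}_{k}$ fall into the span of its neighbours; this is where any finer quantitative hypotheses on the smoothing $\Theta$ or on the regularity of the $d_{j}$ would have to enter.
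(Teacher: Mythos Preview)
Your arguments for conditions~(2) and~(3) coincide with the paper's: the partition of unity is checked by the same dominant/residual split, invoking (\ref{eq:residualFunction2}) and the built-in normalisation $\sum_{k}\theta_{k}(\cdot-k)=1$; the homogeneous-segment reduction proceeds by the same identification $\dot{\beta}^{(n)}_{k}=\beta^{(n)}$ and $\Omega\equiv 0$ on the relevant interval.

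For the Riesz bound the paper does not follow your spatial/perturbation route but works entirely in the Fourier domain. It expresses $\widehat{s}(\omega)$ through a mixed kernel $\widehat{\Phi}(\omega,e^{j\Omega})=\sum_{k}\widehat{\varphi}_{k}(\omega)\,e^{-j\omega k}e^{j\Omega k}$, obtains the upper constant $B$ from Parseval plus Cauchy--Schwarz, and then bounds $\int_{0}^{2\pi}\|\widehat{\Phi}_{\Omega}\|_{\LL}^{2}\,d\Omega$ above and below by its $\sup$ and $\inf$ over $\Omega$ (relying on finiteness of the sample set to ensure $\|\widehat{\Phi}_{\Omega}\|_{\LL}<\infty$). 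Your approach trades this machinery for direct pointwise control and Gram-operator perturbation; it is more elementary and more explicit about where the difficulty lies. You are right that the lower bound is the crux: the paper's own lower-bound step is likewise not fully quantitative---its case split (\ref{eq:scenario-a})/(\ref{eq:scenario-b}) in effect produces an $A$ that may depend on $c$---so neither argument, as written, closes the \emph{uniform} lower bound without an additional finiteness or nondegeneracy hypothesis. Your explicit flag on this point is appropriate.
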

\begin{proof}
See Appendix I.
\end{proof}

\subsection{Illustrative Example Realization of Domain-informed B-spline Generating Bases}
\label{sec:example}
Two example realizations of domain-informed B-spline generating bases are presented to illustrate the construction scheme. Before presenting the constructions, a scheme is defined to randomly realize inhomogeneous domains that satisfy (\ref{eq:pouDomain}), consisting of a desired number of subbands with varying patterns of transition between subbands. A tunable $\Theta(x)$, cf. (\ref{eq:theta}), is also defined to enable varying the extent of adaptation of the bases to the domain knowledge.     

\subsection*{Realization of inhomogeneous domains}
An inhomogeneous domain consisting of $J$ subdomains can be realized through the following scheme. Let $L\in \mathbb{R}^{+}$ and $U \in \mathbb{R}^{+}$ denote the domain's lower and upper range, respectively. A system of $K \in \mathbb{Z}^{+}$ Meyer-type kernels $\{m_{k}(x), \: \forall x\in [L,U]\}_{k=1,\ldots,K}$ is constructed, see Fig.~\ref{fig:umt}(a). The details of the construction are given in Appendix~II. The benefit of using a Meyer-type system of kernels is that they provide smooth, compactly-supported kernels with overlap only between adjacent kernels. A random, monotonically increasing function $w(x): [L,U] \rightarrow [L,U]$ is then constructed. By incorporating $w(x)$ in $\{m_{k}(x)\}_{k=1,\ldots,K}$, a warped version of the Meyer-type system is obtained as: $\{m_{k}^{'}(x) = m_{k}(w(x))\}_{k=1,\ldots,K}$, see Fig.~\ref{fig:umt}(b). The warped kernels exhibit varying patterns of transition between adjacent kernels, with different orders of smoothness. 

\begin{figure}[htbp] 
   \centering
   \includegraphics[width=0.49\textwidth]{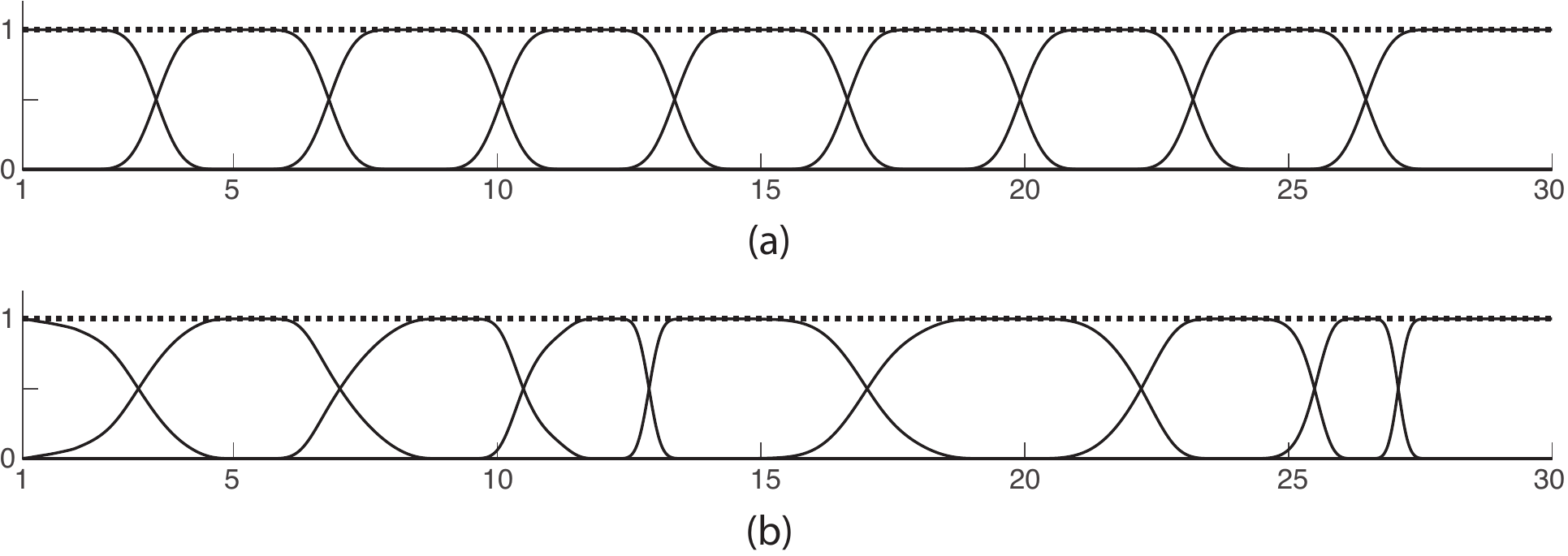} 
   \caption{(a) A Meyer-type system of kernels; $K=9, L=1$ and $U=30$. (b) A  warped version of the system of kernels in (a).}
   \label{fig:umt}
\end{figure}

A set of $J$ subdomain functions satisfying (\ref{eq:subdomains}) are then realized as
\begin{equation}
\forall x \in [L,U], \quad  d_{j}(x) = \sum_{k\in{\mathcal{K}_{j}}} m_{k}(x),\quad j=1,\ldots, J,  
\nonumber
\end{equation}
where $\mathcal{K}_{j} \subset \{1,\cdots,K\}$ such that $\cup_{j=1}^{J}\mathcal{K}_{j} = \{1,\cdots,K\}$ and $\cap_{j=1}^{J}\mathcal{K}_{j} = \emptyset$. Fig.~\ref{fig:diss_order1and3}(a) illustrates an inhomogeneous domain, consisting of two subdomains, constructed using this scheme. The domain has several homogeneous intervals, such as $[1, 6]$ and $[19,21]$, as well as varying inhomogeneous intervals present at the transition regions between the two subdomains. 

\begin{figure*}[htbp]
\begin{center}
\includegraphics[width=0.99\textwidth]{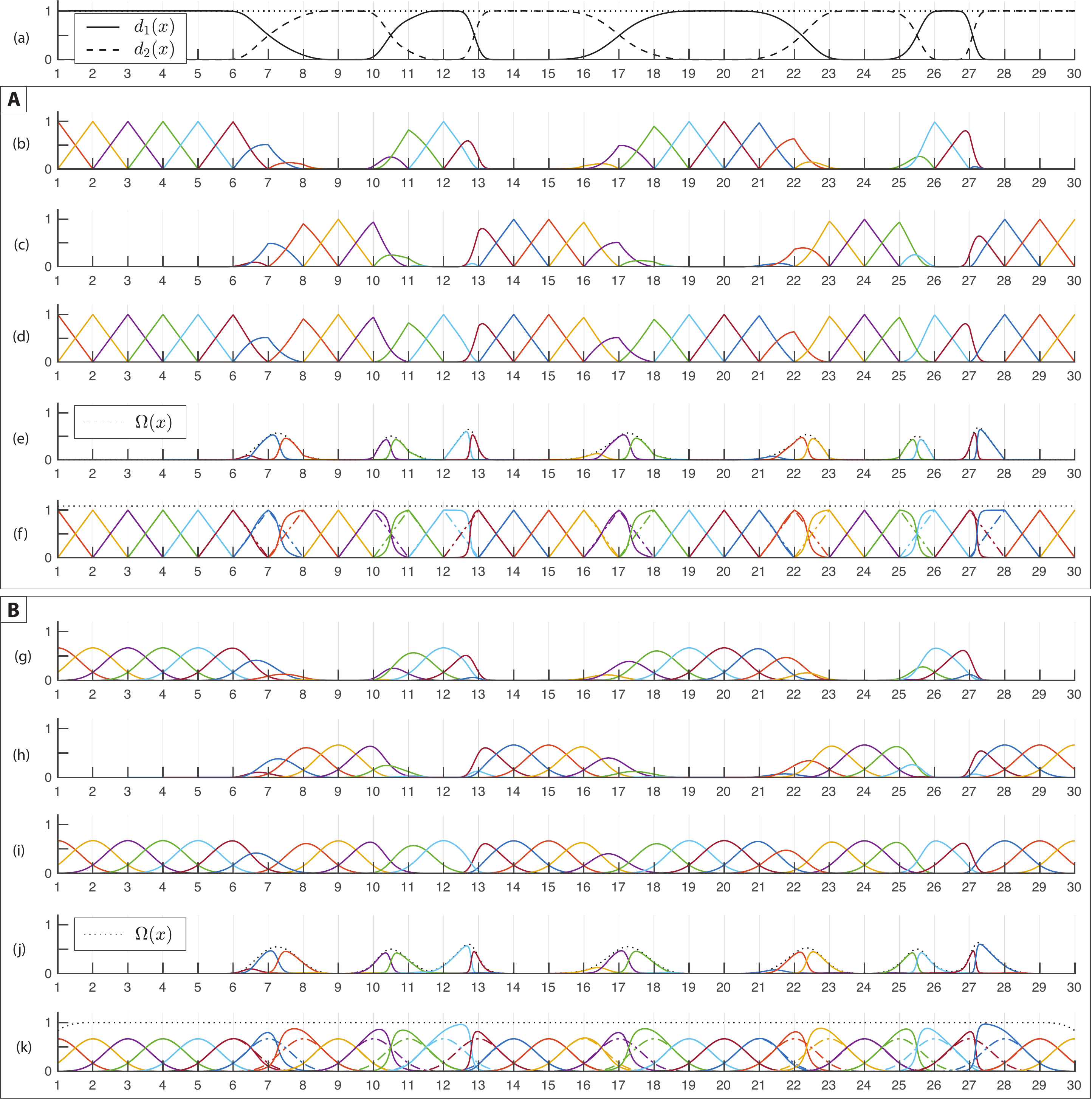}
\caption{Construction of domain-informed B-splines of order one (box A) and three (box B). (a) Realization of an inhomogeneous domain consisting of two subdomains, satisfying (\ref{eq:subdomains}). (b) Subdomain-informed B-splines $\beta^{(1)}_{k,1}(x-k)$. (c) Subdomain-informed B-splines $\beta^{(1)}_{k,2}(x-k)$. (d) Dominant kernels $\dot{\beta}_k^{(1)}(x-k)$. (e) Residual kernels $\ddot{\beta}_{k}^{(1)}(x-k)$. (f) Domain-informed B-splines of order one $\beta_{k}^{(1)}(x-k)$, constructed as the superposition of $\dot{\beta}_{k}^{(1)}(x-k)$ and $\ddot{\beta}_{k}^{(1)}(x-k)$. $\beta_{k}^{(1)}(x-k)$ (solid lines) are overlaid on their corresponding standard first order B-splines $\beta^{(1)}(x-k)$ (dashed lines). The dotted line shows the partition of unity property of the basis, cf. (\ref{eq:pouBasis}). (g)--(k) The same as (b)--(f), but for B-splines of order three. $\gamma=10$ in both the first order and third order designs.}
\label{fig:diss_order1and3}
\end{center}
\end{figure*}

\subsection*{Defining a tunable $\Theta(\cdot)$ function}
By incorporating a suitable $\Theta(\cdot)$ in (\ref{eq:theta}), the extent of adaptation to variation in proximity of subdomain transitions can be increased in the design of domain-informed B-splines. In particular, we exploit the logistic function to define $\Theta(\cdot)$ as 
\begin{equation}
\label{eq:ThetaLogistic}
\Theta(x) = (1+e ^{-\gamma/2})/(1+e ^{-\gamma (x-1/2)} ),
\end{equation}
where $\gamma \ge1$ is a free parameter. 

\subsection*{Realizations of B-spline Generating Bases}
In Fig.~\ref{fig:diss_order1and3}, sections A and B show constructions of domain-informed B-spline generating bases using first and third order B-spline kernels, respectively. Domain-informed B-splines whose support lie at a homogeneous part of the domain are identical to their standard counterpart; for example, see domain-informed B-splines centered at sample points 5, 15 and 20 in Figs.~\ref{fig:diss_order1and3}(e) and (i). On the other hand, domain-informed B-splines that reside in the adjacency of domain transition boundaries deviate from their standard counterpart. Domain-informed B-splines have their peak amplitude close to the sample point where they are localized and have declining tails as in standard B-splines. This property is due to the employed construction scheme that enables adaptation to domain inhomogeneities relative to the amplitude of the generating function along its support; i.e., the central parts of the kernels are more affected by domain inhomogeneities than their tails. 

\subsection{Domain-Basis Coherence}
It is insightful to quantify the coherence between a given inhomogeneous domain and a basis defined on the domain. To this aim, we first define a domain similarity metric.
      
\subsubsection*{Definition (Domain Similarity Metric)}
\label{def:2}
Given a description of an inhomogeneous domain as in (\ref{eq:subdomains}), a domain similarity metric can be defined in the $\Delta$-neighborhood of each $k \in \mathbb{Z}$ as 
\begin{equation}
\label{eq:dsm}
\xi_{k}(x) = 
\begin{cases}
\Theta  
\left ( 
\displaystyle{
1- \frac{1}{J} \sum_{j=1}^J \left| d_j(x_{k}T) - d_j(k T) \right|
} 
\right )
, &|x| < \Delta
\\
0, &|x| \ge \Delta,
\end{cases}
\end{equation}
where $x_{k} = x+k$, $\Theta(\cdot)$ is defined as in (\ref{eq:theta}); $\xi_{k}(x) \in [0,1]$. \\

The domain similarity metric can be used to quantify the relative difference in coherence between a domain and a domain-informed generating basis relative to that between the domain and the corresponding shift-invariant, generating basis.  

\subsubsection*{Definition (Domain-Basis Coherence Factor)}
Given a description of an inhomogeneous domain as in (\ref{eq:subdomains}), a 
 domain-basis coherence factor associated to domain-informed basis $\Phi$ can be defined as  
\begin{equation}
\mathcal{R}_{\Phi} = \frac{\sum_{k\in \mathbb{Z}} \langle \xi_{k}(x), \varphi_{k}(x)\rangle_{\text{L}_{2}}}{\sum_{k\in \mathbb{Z}} \langle \xi_{k}(x), \varphi(x)\rangle_{\text{L}_{2}}}.
\end{equation}

\begin{figure}[htbp] 
   \centering
   \includegraphics[width=0.48 \textwidth]{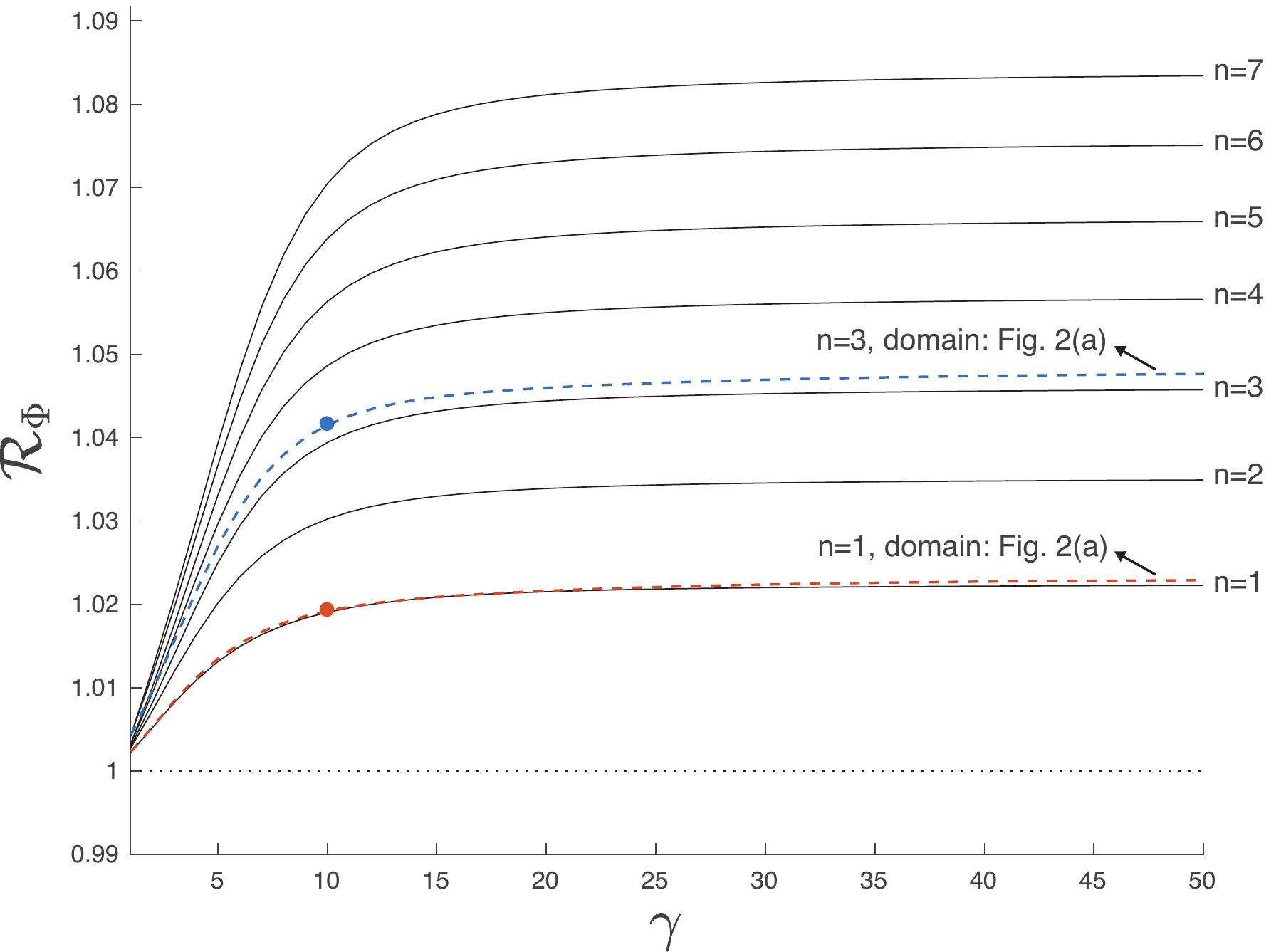} 
   \caption{Domain-basis coherence factor, for domain-informed B-splines of orders 1-7. The solid lines show ensemble values over 1000 inhomogeneous domain realizations. The dashed lines show the domain-basis coherence factor associated to the domain shown in Fig.~\ref{fig:diss_order1and3}; the two marked points show the domain-basis coherence factor for the particular domain-informed B-spline generating basis shown in Figs.~\ref{fig:diss_order1and3}(e) and (i), i.e., using $\gamma=10$.}
   \label{fig:db_coherence}
\end{figure}

In Fig.~\ref{fig:db_coherence}, the dashed curves show the change in $\mathcal{R}_{\Phi}$ for the domain shown in Fig.~\ref{fig:diss_order1and3}(a) as a function of the free parameter $\gamma$ in constructing domain-informed B-spline basis; the two marked positions on the dashed curves show $\mathcal{R}_{\Phi}$ associated to the domain-informed B-spline basis shown in Figs.~\ref{fig:diss_order1and3}(e) and (i). $\mathcal{R}_{\Phi}$ is greater than one across $\gamma$, ranging from 1 to 50 with steps of 1, reflecting a greater domain-basis coherence of the domain-informed B-spline basis relative to the standard B-spline basis. The coherence factor is greater for the third order domain-informed B-spline basis than that of the first order domain-informed B-spline basis. Moreover, $\mathcal{R}_{\Phi}$ increases with $\gamma$, up to a point where it almost saturates. Thus, $\gamma$ can be used to tune the level of adaptation of the design to the domain information relative to the extent of certainty associated with the domain data. The lower the adopted value of $\gamma$, the lower is the domain adaptation. 

To provide a more general picture of the coherence behavior, we generated 1000 realizations of inhomogeneous domains using the same scheme as that used to construct the domain shown in Fig.~\ref{fig:diss_order1and3}(a). Domain-informed B-spline basis of orders from 1 to 7 were constructed for each realization of inhomogeneous domain, over the range of $\gamma$. Fig.~\ref{fig:db_coherence} shows the resulting ensemble $\mathcal{R}_{\Phi}$, i.e., each point being the average of 1000 $\mathcal{R}_{\Phi}$ values, one associated to each domain realization. For each domain-informed B-spline basis order $n$, the ensemble coherence factor increases with the increase in $\gamma$, and gradually saturates. Moreover, for a given $\gamma$, the ensemble coherence factor associated to a higher order basis is greater than that associated to a lower order basis. {\color{black} In the following, all domain-informed B splines are generated with $\gamma=10$, which is the same as that used for the constructions shown in Fig.~\ref{fig:diss_order1and3}.}  

\section{Domain-Informed B-spline Interpolation}
We propose domain-informed B-spline interpolation (DIBSI) with the following assumption about the underlying signal: at inhomogeneous intervals of the domain, the signal is expected to be consistent with the given domain description, whereas at homogeneous intervals of the domain, the signal is expected to have smoothness characteristics provided by the chosen spline order $n$. \\ 

\begin{propo} 
\label{propo:propo1} (Domain-Informed B-spline Interpolation)
\noindent 
For a given domain description as in (\ref{eq:subdomains})--(\ref{eq:pouDomain}), a sequence of samples $\{s[k]\}_{k \in \mathbb{Z}}$ as in (\ref{eq:samples}) and a B-spline generating function $\beta^{(n)}(x)$, the $n$-th order domain-informed B-spline interpolant of the set of samples is obtained as
\begin{equation}
\label{eq:disi}
\forall x \in \mathbb{R}, \quad \tilde{s}(x) = \sum_{k\in \mathbb{Z}} c[k]  \beta^{(n)}_{k}(\frac{x}{T}-k),
\end{equation} 
where $\beta^{(n)}_{k}(\cdot)$ is given as in (\ref{eq:displines}) and the coefficients $c[k]$ are obtained through solving the set of equations 
\begin{equation}
\label{eq:solvecoeffs}
\forall k \in \mathbb{Z},\quad \sum_{l=k-\floor{n/2}}^{k+\floor{n/2}} c[l] \beta^{(n)}_{l}(k-l) = s[k].
\end{equation}
The resulting interpolant $\tilde{s}(x)$ has the following properties:
\begin{enumerate}
\item
$\tilde{s}(x)$ satisfies the consistency principle \cite{Unser1994}
\begin{equation}
\tilde{s}_{T}(kT) = s[k], \forall k \in \mathbb{Z}.
\end{equation}
 \item 
At homogeneous parts of the domain, $\tilde{s}(x)$ is equal to the $n$-th order spline interpolant obtained using standard spline interpolation, i.e., a piecewise polynomial of order $n-1$.  
 \end{enumerate}
\end{propo}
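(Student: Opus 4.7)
The statement contains three claims I would address separately: well-posedness of the bi-infinite system (\ref{eq:solvecoeffs}), the consistency identity $\tilde{s}(kT)=s[k]$, and the reduction of $\tilde{s}(x)$ to a standard piecewise polynomial of order $n-1$ on homogeneous parts of the domain. Claims 2 and 3 fall out of the support structure of $\beta^{(n)}_{k}$ together with Proposition~1; the nontrivial piece is establishing unique solvability of (\ref{eq:solvecoeffs}) in $\ell_{2}$.

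For consistency, substituting $x=kT$ in (\ref{eq:disi}) restricts the sum to indices $l$ with $|k-l|<\Delta^{(n)}=(n+1)/2$, since $\beta^{(n)}_{l}$ is supported in $[-\Delta^{(n)},\Delta^{(n)}]$; this is exactly the index set $\{k-\floor{n/2},\ldots,k+\floor{n/2}\}$ appearing in (\ref{eq:solvecoeffs}). Hence $\tilde{s}(kT)$ equals the left-hand side of (\ref{eq:solvecoeffs}), which equals $s[k]$ by definition of $c[\cdot]$. For the homogeneous-region property, I would fix $x_{0}\in[a,b]$ where $[a,b]$ is a homogeneous segment wide enough that the support $[(l-\Delta^{(n)})T,(l+\Delta^{(n)})T]$ of every basis element indexed by $l\in\Delta^{(n)}_{x_0}$ sits inside $[a,b]$. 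Property~3 of the definition of a domain-informed generating basis, verified for $\{\beta^{(n)}_{k}\}$ in Proposition~1, then yields $\beta^{(n)}_{l}(x/T-l)=\beta^{(n)}(x/T-l)$ for every contributing $l$, so (\ref{eq:disi}) becomes $\sum_{l} c[l]\beta^{(n)}(x/T-l)$ on a neighborhood of $x_{0}$; this is by construction a piecewise polynomial of degree $n-1$ in $x$, i.e., a standard $n$-th order spline, as claimed. A boundary-layer argument of width $\Delta^{(n)}T$ at the endpoints $a,b$ then extends the identification to the entire homogeneous region except possibly at these endpoints.

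For well-posedness, I would read (\ref{eq:solvecoeffs}) as $Mc=s$, where $M:\ell_{2}\to\ell_{2}$ is the banded operator with entries $M_{k,l}=\beta^{(n)}_{l}(k-l)$, nonzero only for $|k-l|\le\floor{n/2}$. Boundedness of $M$ is immediate from the finite bandwidth and the uniform bound on $|\beta^{(n)}_{l}|$. The main obstacle is showing that $M$ has a bounded inverse on $\ell_{2}$. My plan is to exploit Proposition~1: the Riesz-basis lower bound $A>0$ makes the synthesis map $c\mapsto\sum_{k} c[k]\,\beta^{(n)}_{k}(x/T-k)$ stable from $\ell_{2}$ to $\mathcal{V}^{(\mathcal{D})}_{\varphi}$, so unique solvability of (\ref{eq:solvecoeffs}) reduces to stability of pointwise sampling at $\{kT\}$ on $\mathcal{V}^{(\mathcal{D})}_{\varphi}$. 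At homogeneous regions $M$ coincides locally with the classical Toeplitz B-spline interpolation matrix, known to be invertible and to admit a stable recursive-filter inverse for every $n$; I would extend this to inhomogeneous regions via a localization/perturbation argument (equivalently, a Schur-type block decomposition of $M$ along the transition regions) to obtain a uniform lower bound $\|Mc\|_{\ell_{2}}\ge c_{0}\|c\|_{\ell_{2}}$, and hence a bounded inverse. This last step — propagating the local invertibility from homogeneous blocks through the (finitely-many-sample-wide) inhomogeneous blocks without losing the uniform constant — is what I expect to require the most care.
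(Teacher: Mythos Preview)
Your arguments for Properties~1 and~2 are essentially identical to the paper's: the paper verifies consistency by evaluating (\ref{eq:disi}) at $x=kT$, reducing the sum to the finite range $|k-l|\le\floor{n/2}$ by compact support and invoking (\ref{eq:solvecoeffs}); and it derives Property~2 directly from Property~3 of the domain-informed generating basis (Proposition~1), exactly as you do. The paper's proof is in fact terser than yours on Property~2---it does not spell out the boundary-layer caveat you raise, nor does it distinguish between ``is locally a piecewise polynomial'' and ``coincides with the global standard-spline interpolant''; your version is more careful on this point.

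Where you diverge from the paper is the well-posedness of (\ref{eq:solvecoeffs}). The paper does \emph{not} prove this at all: its proof of Proposition~2 addresses only the two enumerated properties, taking the existence of the coefficients $c[k]$ as given, and relegates the structure of the linear system to a post-proof remark (noting it is band-diagonal) without establishing invertibility. So your well-posedness discussion is extra work beyond what the paper actually proves. Your reduction---Riesz stability of synthesis plus stability of pointwise sampling on $\mathcal{V}^{(\mathcal{D})}_{\varphi}$---is the right framing, but note that the Riesz property alone does not supply the sampling stability, and your proposed localization/perturbation argument across inhomogeneous blocks would need a genuine diagonal-dominance or positivity input (e.g., that $\beta^{(n)}_{l}(0)$ dominates the off-diagonal row sums uniformly in $l$) to go through; as stated it is a plausible strategy rather than a proof. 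Since the paper does not claim or prove this, the gap is not a defect in reproducing the paper's argument---you are simply attempting more than the paper does.
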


\begin{proof}
(Property 1) Domain-informed B-spline interpolation leads to perfect fit at sample points $k \in \mathbb{Z}$ since 

\begin{align}
\tilde{s}(k)  & = \left \langle \tilde{s}(x), \delta(x-kT) \right \rangle 
\nonumber
\\
& \stackrel{(\ref{eq:disi})}{=} \sum_{l \in \mathcal{K}} c[k] \underbrace{\left \langle \beta^{(n)}_{l}(\frac{x}{T}-l), \delta(x-kT) \right \rangle}_{\beta^{(n)}_{l}(k-l)} 
\nonumber
\\
& \stackrel{(\ref{eq:displines})}{=} \sum_{l=k-\floor{n/2}}^{k+\floor{n/2}}c[k] \beta^{(n)}_{l}(k-l)
\nonumber
\\
& \stackrel{(\ref{eq:solvecoeffs})}{=} s[k].
\end{align}

(Property 2) In any homogeneous domain interval $[a,b]$, domain-informed B-spline basis functions are identical to their standard B-spline basis function (Property 3 of domain-informed B-spline generating basis). Therefore, the domain-informed and standard B-spline interpolants become identical within the interval $[a,b]$.

\end{proof}

\subsubsection*{Remark} For domain-informed B-splines of order $n = 0$ and $1$, the coefficients $c[k]$ are identical to the signal samples, i.e., $\forall k \in \mathcal{K},\: c[k] = s[k]$. For higher order domain-informed B-splines, using matrix formulation, the solution to (\ref{eq:solvecoeffs}) is obtained by solving a $2\floor{n/2}+1$ band-diagonal matrix of linear equations; for cubic B-splines, i.e., $n=3$, the formulation leads to solving a tridiagonal matrix equation.

\begin{figure*}[tbp] 
   \centering
   \includegraphics[width=0.99\textwidth]{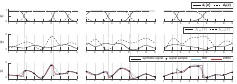} 
   \caption{\color{black} (a) Realizations of three inhomogeneous domains, each consisting of two subdomains; the construction is done using the scheme given in Section~\ref{sec:example}. (b) Realizations of pairs of random spline signals, cf. (\ref{eq:ftvj}). (c) Realizations of inhomogeneous signals obtained through integrating the random spline signals given in (b), cf. (\ref{eq:synthetic signals}), and interpolants obtained through interpolating the signals' samples using BSI and DIBSI of order three.}
   \label{fig:disi_synthetic_signals}
\end{figure*}

\section{DIBSI vs BSI on simulated data}
A set of synthetic random signals is required to evaluate the proposed interpolation scheme. Due to the assumption of inhomogeneity of the signal domain, realizing such signals is not straightforward. We present a scheme for realizing random signals of a desired piecewise smoothness characteristic on an inhomogeneous domain where the signals respect the inhomogeneity of the domain. Assume that an inhomogeneous domain is given as in (\ref{eq:subdomains}), consisting of $J$ subdomains. First, a set of $J$ signals of a desired smoothness characteristic are realized; in particular, each of the $J$ signals are constructed by realizing (i) a randomly jittered, uniform knot sequence {\color{black}$t[k]=k+\epsilon_{k}$}, where $k \in \mathbb{Z}$, $\epsilon_{k} \in[-\alpha,\alpha]$ and $\alpha \in [ 0,0.5)$, and (ii) $J$ random sequences of control values $\{v_{j}[k] \in [ 0,1] \}_{j=1}^{J}$. Using (i) and (ii), $J$ random spline signals of order $n$, denoted $f_{t,v_{j}}(x)$, are constructed such that \cite{DeBoorBook}
\begin{equation}
\forall k\in\mathbb{Z}, \quad  f_{t,v_{j}}(t[k]) = v_{j}[k], \quad j=1,\ldots, J.
\label{eq:ftvj}
\end{equation}
The signals $\{f_{t,v_{j}}(x)\}_{j=1}^{J}$ are then transformed to signals associated to each subdomain as 
$$
s_{j}(x) = H\left(d_{j}(x)-\frac{1}{J}\right) f_{t,v_{j}}(x), \quad j=1,\ldots, J,
$$
where $H(\cdot)$ denotes the Heaviside step function. With this model, we make a minimal assumption on the nature of the signals at the transition between subdomains. By superimposing the subdomain signals, a signal defined on the given inhomogeneous domain is obtained as 
\begin{equation}
 s(x) = \sum_{j\in \mathcal{J}} s_{j}(x).
 \label{eq:synthetic signals}
\end{equation}

Using this signal construction scheme and the domain construction scheme presented in Section~\ref{sec:example}, a set of $D$ domains, and on each domain a set of $S$ signals were randomly realized. B-spline interpolation (BSI) and DIBSI were then implemented on samples derived from each signal across a range of sampling steps $T \in [0.1,1]$ (step size = 0.1) and B-spline orders $n=1,\cdots,6$. {\color{black} Fig.~\ref{fig:disi_synthetic_signals} presents realizations of segments of three inhomogeneous domains, synthetic signals, and associated BSI and DIBSI interpolants, for $T=1$ and $n=3$. DIBSI generally better recovers the signal at inhomogeneous segments of the domain, through minimizing the contribution of signal samples that are not linked to the major subdomain at the interpolation point. For example, in the second domain realization, consider interval $(6,7)$, where the domain is dominantly associated to the first subdomain, i.e., $d_{1}(x)$. The BSI interpolant at this interval is heavily dependent on both adjacent samples, i.e., samples 6 and 7, whereas the DIBSI interpolant is mainly dependent on the value of sample 7, which is associated to $d_{1}(x)$. As another example, in the third domain realization, consider interval $(3,4)$, where the two subdomains transition almost at the center of the interval; BSI and DIBSI exhibit a similar behaviour in that they dominantly rely on sample 5 and sample 6 for obtaining the interpolant at the lower half and the upper half of the interval, respectively. In particular, BSI exhibits a balanced dependence on the sample value and the distance from the sample, whereas DIBSI exhibits a greater dependence on the sample value. Moreover, when subdomains transition at a point almost in between two samples, the DIBSI interpolant exhibits a pattern that reflects the underlying logistic function, cf.~(\ref{eq:ThetaLogistic}), used in constructing domain-informed B-splines. Fig.~\ref{fig:disi_gamma} illustrates how variation in the logistic function's $\gamma$ parameter is reflected in the DIBSI interpolant at transition bands. Finally, it is worth noting that if the signal exhibits a large variation in amplitude in the interval preceding the transition between subdomains, being inconsistent to the trend observed along the previous signal samples, neither BSI nor DIBSI can appropriately recover the signal; an example can be seen in interval $(2.5,3)$ in the third realization in Fig.~\ref{fig:disi_synthetic_signals}} 

\begin{figure}[b] 
   \centering
   \includegraphics[width=0.45\textwidth]{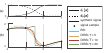} 
   \caption{\color{black} (a) Interval (4,7) of the third domain realization shown in Fig.~\ref{fig:disi_synthetic_signals} and (b) the associated synthetic signal, BSI and DIBSI interpolants of order three. DIBSI interpolants using three different $\gamma$ are illustrated.}
   \label{fig:disi_gamma}
\end{figure}

For $\tilde{s}_{T}(x)$ to be a good approximation of $s(x)$, the quality of interpolation needs to improve proportionally to the decrease in $T$. The interpolation error can be quantified using the distance metric $\| \tilde{s}_{T}(x) - s(x)\|_{\LL}$. The ensemble interpolation error associated with DIBSI can be quantified as 
\begin{equation}
\varepsilon_{\mathcal{V}_{\varphi}^{(\mathcal{D})}}(T) = \frac{1}{D S} \sum_{i=1}^{D} \sum_{j=1}^{S} \frac{\| \tilde{s}_{T}^{(i,j)}(x) - s^{(i,j)}(x)\|_{\LL}}{\|s^{(i,j)}(x)\|_{\LL}},
\label{eq:ensembleInterpError}
\end{equation}
where $s^{(i,j)}(x)$ denotes the $j$-th realized signal on the $i$-th realized domain and $\tilde{s}_{T}^{(i,j)}(x) \in \mathcal{V}_{\varphi}^{(\mathcal{D})}$ denotes the domain-informed B-spline interpolant obtained based on samples extracted from $s^{(i,j)}(x)$ at a sampling step $T$. For BSI where $\tilde{s}_{T}^{(i,j)}(x) \in \mathcal{V}_{\varphi}$, the ensemble interpolation error, denoted $\varepsilon_{\mathcal{V}_{\varphi}}(T)$, can be quantified in the same way as in (\ref{eq:ensembleInterpError}). 

\begin{figure}[tbp] 
   \centering
   \includegraphics[width=0.49\textwidth]{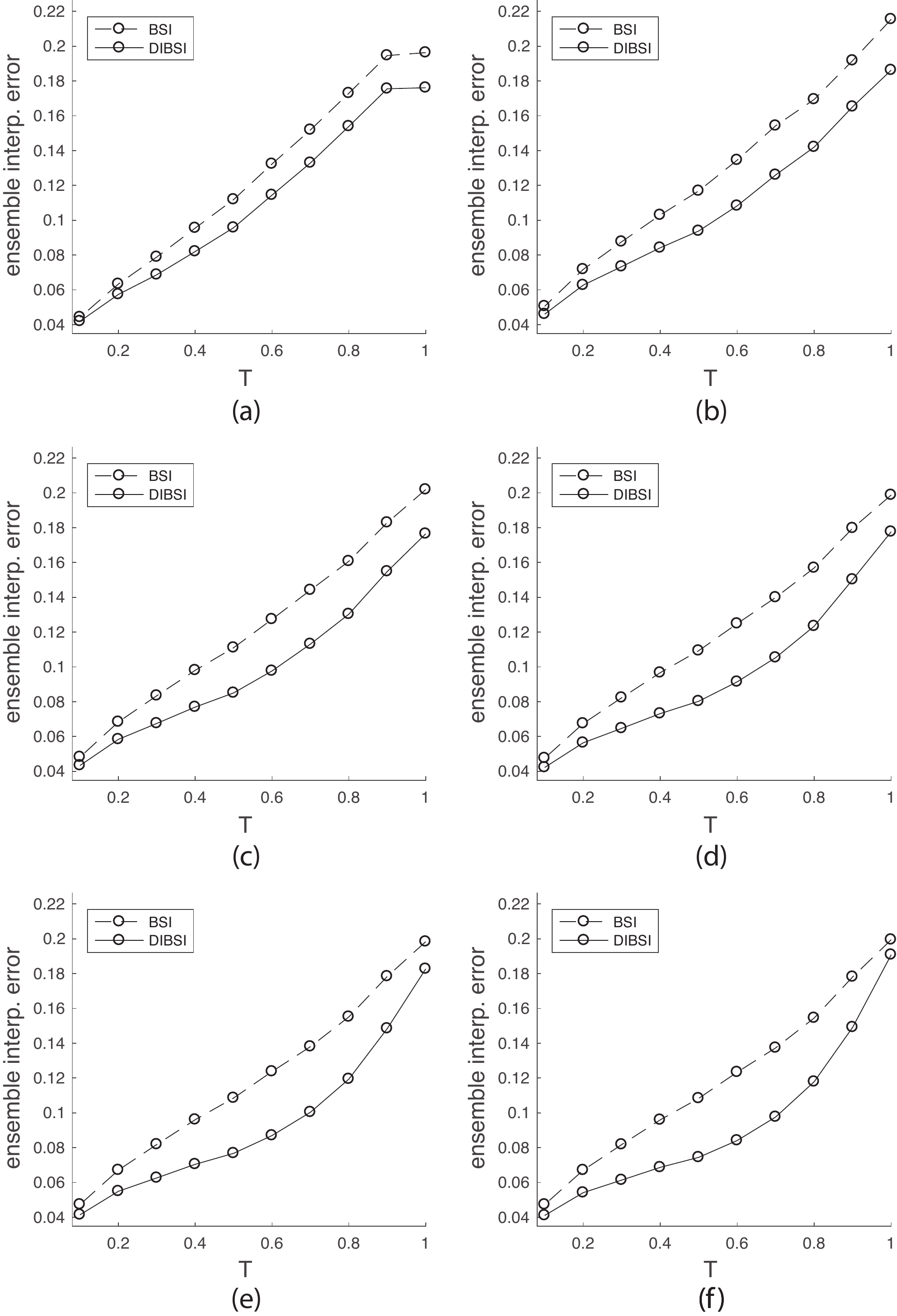} 
   \caption{(a)-(f) Ensemble interpolation errors of BSI and DIBSI, with generating basis constructed using $\beta^{(n)}(x)$ for $n =1,\ldots,6$, respectively. }
   \label{fig:disi_simulation_results}
\end{figure}

Fig.~\ref{fig:disi_simulation_results} shows ensemble interpolation errors $\varepsilon_{\mathcal{V}_{\varphi}}(T)$ and  $\varepsilon_{\mathcal{V}_{\varphi}^{\mathcal{D}}}(T)$ for $\varphi = \{\beta^{(n)}(x)\}_{n=1,\cdots,6}$, on a randomly realized set of 10,000 signals ($D=100$ and $S=100$). DIBSI outperforms BSI across the range of B-spline orders. For each B-spline order, both $\varepsilon_{\mathcal{V}_{\varphi}}(T)$ and  $\varepsilon_{\mathcal{V}_{\varphi}^{\mathcal{D}}}(T)$ decrease proportional to the decrease in $T$. The difference between $\varepsilon_{\mathcal{V}_{\varphi}}(T)$ and  $\varepsilon_{\mathcal{V}_{\varphi}^{\mathcal{D}}}(T)$ increases with the increase in the order of the B-splines. 

Aside from the comparisons showing the enhanced performance of DIBSI over BSI across the range of B-spline orders, it is insightful to compare the performance of DIBSI itself across different B-spline orders to that of BSI. Figs.~\ref{fig:disi_approximation_order}(a) and (b) show the approximation errors of BSI and DIBSI, respectively; the signal set used is the same as the one used for Fig.~\ref{fig:disi_simulation_results}, and thus, the plots are essentially a rearrangement of those shown in Fig.~\ref{fig:disi_simulation_results}. BSI using higher order B-splines does not consistently lead to greater reduction in the interpolation error, see Fig.~\ref{fig:disi_approximation_order}(a), whereas DIBSI shows better consistency, see Fig.~\ref{fig:disi_approximation_order}(b).

\begin{figure}[tbp] 
   \centering
   \includegraphics[width=0.49\textwidth]{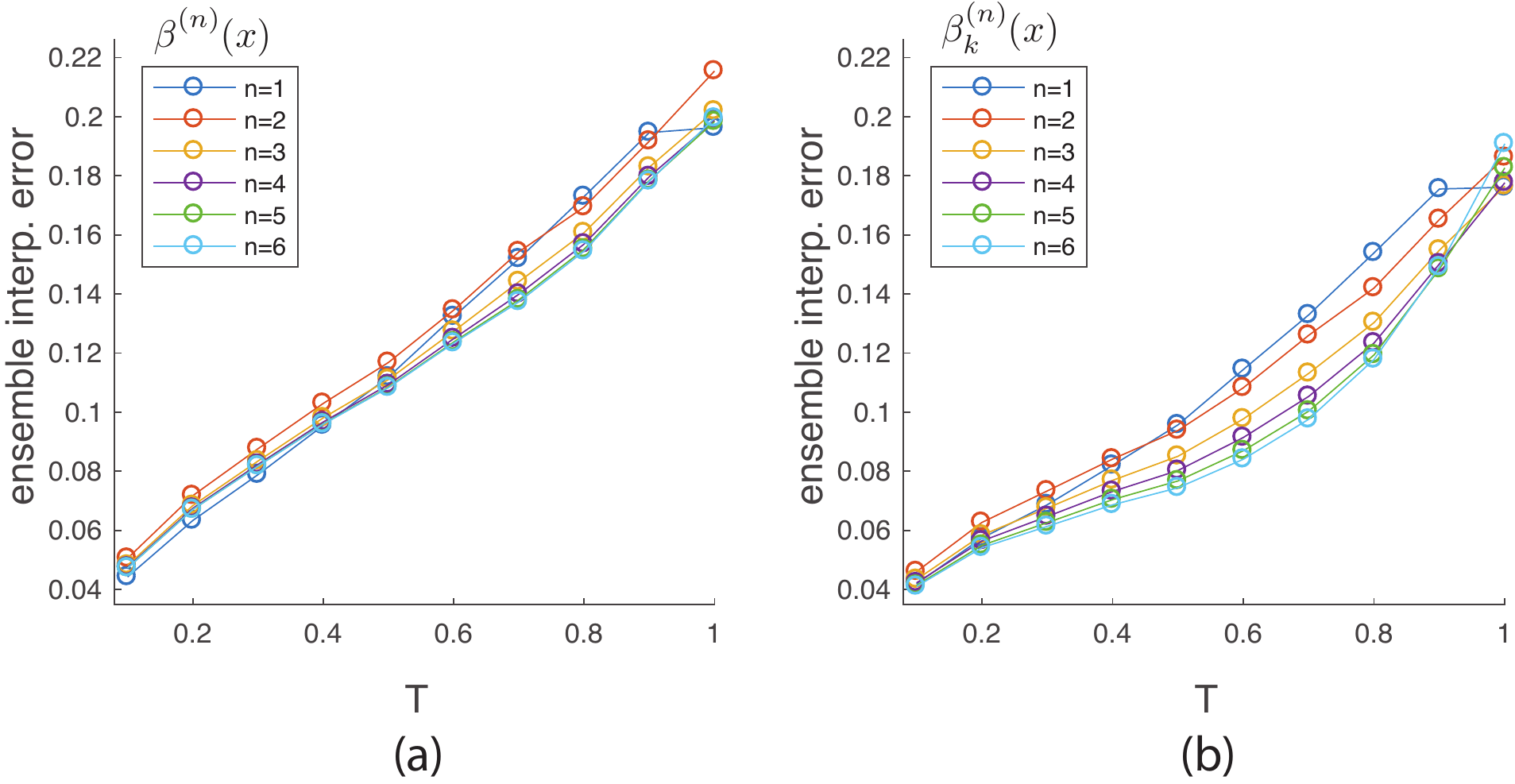} 
   \caption{Ensemble interpolation errors over signals realized on inhomogeneous domains $(D=100,S=100)$ using (a) BSI and (b) DIBSI.}
   \label{fig:disi_approximation_order}
\end{figure}

\begin{figure*}[htbp] 
   \centering
   \includegraphics[width=0.99\textwidth]{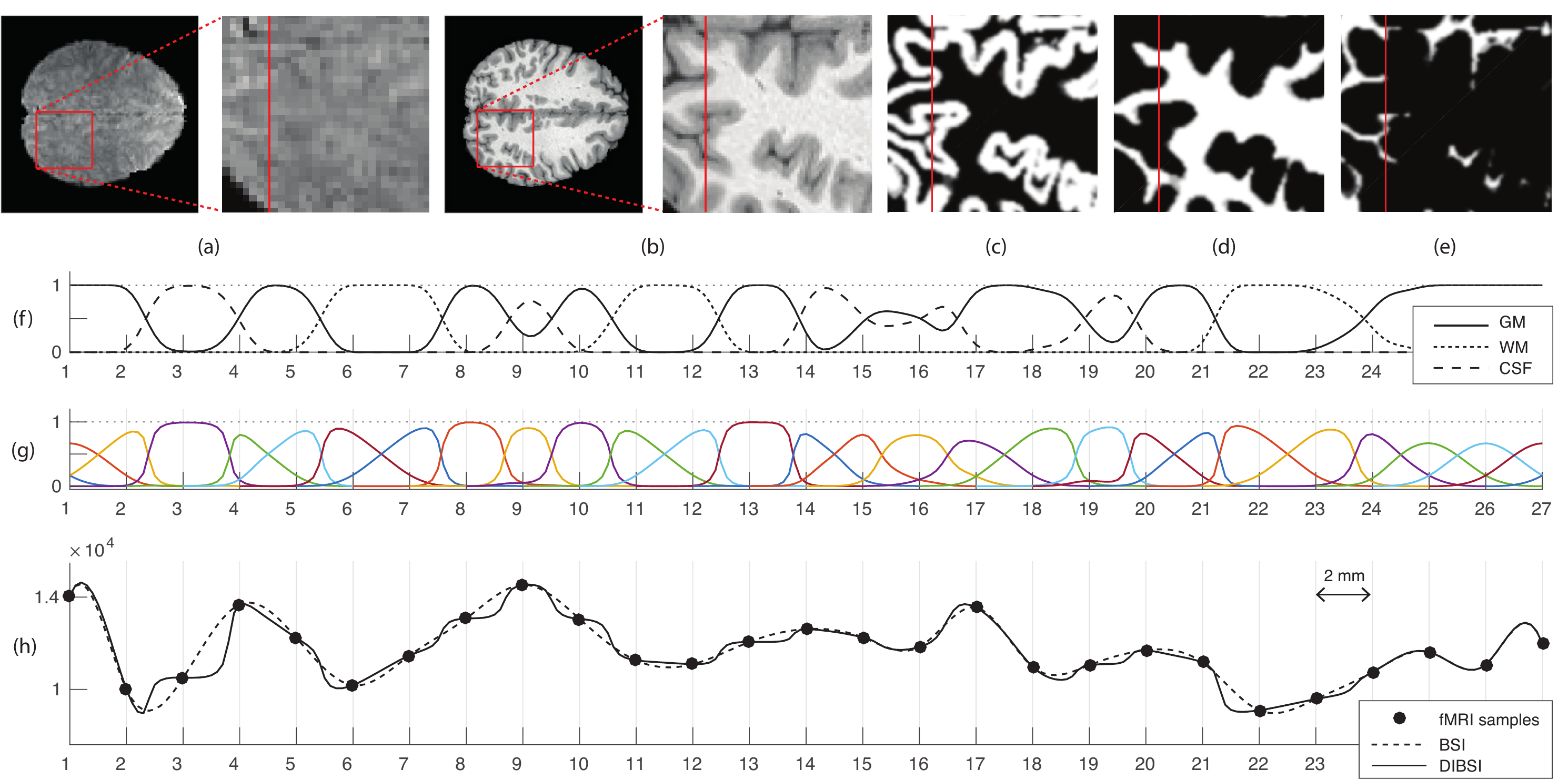} 
   \caption{(a) A slice of fMRI data of a subject, including a close-up of an ROI. (b) The subject's brain anatomy at the same neurological coordinate as in (a). (c) Gray matter, (d) white matter and (e) cerebrospinal fluid segmented probability maps of the ROI shown in (b). (f) Description of the inhomogeneous domain along the marked line within the ROI in (b), from top to bottom. (g) DIBSI basis, of order three, associated to the domain shown in (f). (h) B-spline interpolation and domain-informed B-spline interpolation of the functional samples along the marked line within the ROI in (a).}
   \label{fig:fmri}
\end{figure*}

\section{DIBSI vs BSI on neuroimaging data}
To show the practical significance of the proposed approach, we present interpolation results using the proposed scheme on an fMRI dataset. In brain studies using fMRI, a sequence of whole-brain functional data is acquired. To track brain activity at high temporal resolution, fMRI data are recorded at a relatively low spatial resolution. The data is commonly accompanied with a three- to fourfold higher resolution anatomical MRI scan, which provides information about the convoluted brain tissue delineating gray matter (GM) and white matter (WM), each of which have different functional properties \cite{Logothetis}, and cerebrospinal fluid (CSF). The topology of GM and WM varies across the brain as well as across subjects \cite{Mangin2004}. The goal is to exploit the richness of subject-specific anatomical data, which define the domain of the acquired fMRI data, for the purpose of improving the quality of interpolation and resampling of fMRI data. Such an interpolation scheme has an aim similar to signal processing techniques that enhance linear \cite{Kiebel2000} and non-linear \cite{Ozkaya2011, Behjat2015} denoising, deconvolution \cite{Farouj2017} and multi-scale decomposition \cite{Behjat2016} of fMRI data through exploiting anatomical constraints. Interpolation approaches that aim to map volumetric fMRI data on to the cortical surface \cite{Grova2006, Operto2008} are also related in the sense that they aim to enhance surface interpolation by accounting for the irregularity of the domain. Yet, such interpolation schemes are different from that we propose here since they adapt to the irregularity of the domain rather than to its inhomogeneity; i.e., they obtain an interpolant on a homogenous, irregular domain rather than one on an inhomogeneous domain.        

Fig.~\ref{fig:fmri} illustrates the setting for applying domain-informed B-spline interpolation on an fMRI dataset. We use data of a subject from the Human Connectome Project \cite{VanEssen2013}. Fig.~\ref{fig:fmri}(a) shows a 2-D slice of fMRI data, extracted from a 3D fMRI volume of the subject. The resolution of the image is $2\times2$ mm$^{2}$. Fig.~\ref{fig:fmri}(b) shows the structural scan of the brain of the same subject, which has an almost threefold higher resolution, $0.7\times0.7$ mm$^{2}$. Both slices are extracted such that they are aligned to the same neurological coordinate. By segmenting the anatomical scan, gray matter, white matter and cerebrospinal fluid probability maps, which determine the probability of each voxel being of either tissue, are obtained; Figs.~\ref{fig:fmri}(b)-(d) show these probability maps, upsampled to a resolution of $0.2\times0.2$ mm$^{2}$. We also treat the region outside the brain mask as part of the cerebrospinal fluid map. Bilinear interpolation of these maps are used to define normalized subdomain functions that satisfy (\ref{eq:subdomains}) along any column/row in the plane. Fig.~\ref{fig:fmri}(f) shows the subdomain function set along the marked line shown in Figs.~\ref{fig:fmri}(b)-(e). The domain is inhomogeneous, build of a convoluted mix of the two tissue types and CSF. 

Fig.~\ref{fig:fmri}(g) shows DIBSI basis of order three constructed for the inhomogeneous domain given in Fig.~\ref{fig:fmri}(f). DIBSI outperforms BSI across the range of B-spline orders. The basis is adapted to the convoluted description of the domain, and is robust to complex delineation patterns between subdomains, for instance see interval $[14, 17]$. Fig.~\ref{fig:fmri}(h) shows the fMRI samples along the marked line shown in Figs.~\ref{fig:fmri}(a), as well as the resulting BSI and DIBSI interpolants. At any homogeneous parts of the domain DIBSI results in an interpolant that is identical to BSI; for instance see the DIBSI and BSI interpolants within the interval [24, 27] in Fig.~\ref{fig:fmri}(g). At the inhomogeneous parts of the domain, DIBSI exhibits finer details than the {\color{black} BSI} image, for instance, see the domain description within the interval [7, 15], cf. Fig.~\ref{fig:fmri}(f). Within this interval, samples from gray matter (samples 8 and 10), white matter (samples 7 and 11) and cerebrospinal fluid (sample 9) are given, cf. Fig.~\ref{fig:fmri}(h). On the one hand, both BSI and DIBSI satisfy the consistency principle at the sample points. On the other hand, in between the samples, BSI maintains the smoothness characteristic enforced by using third order B-splines whereas DIBSI leads to a signal that is consistent with the description of the domain. For example, consider sample point 3 that is purely associated to CSF. Its adjacent samples, i.e., sample points 2 and 4, are  both associated to gray matter. In such cases, the domain-informed B-splines realized at the sample point extensively adapt to the associated domain, and as such, minimize the mixing of sample points associated to different subdomains. In other words, the best sample to use to obtain the interpolant within the range [2.5,4] is sample point 3, and therefore, the contribution of $\beta_{2}^{(3)}(x-2)$ and $\beta_{4}^{(3)}(x-4)$ is minimized within this range. Similar scenarios are observed at sample points 8 and 13. As another example, consider samples 6 and 7 that are both purely associated to white matter. Their adjacent samples, i.e., samples 5 and 8, lie within gray matter. In such a scenario, domain-informed B-splines exhibit a \textquoteleft crossing\textquoteright\, behavior. Although both $\beta^{(3)}_{6}(x-6)$ and  $\beta^{(3)}_{8}(x-8)$ have part of their support within the interval $[5.5,7.5]$, their amplitude within this interval is significantly suppressed to prevent mixing of gray matter samples with white matter samples in obtaining an interpolant within white matter. Similar scenarios can be observed along the domain. 
 
\if 0
; by consistent we refer to that the interpolated signal values are closer in value to the sample in their adjacency which has a domain more similar to the interpolated point. For instance, see the interpolated signal between samples 3 and 4.  The domain within this interval is closely similar to the domain at sample 3, and thus, the interpolant is significantly adapted to value of the third sample. 
\fi

\begin{figure}[tb] 
   \centering
   \includegraphics[width=0.48\textwidth]{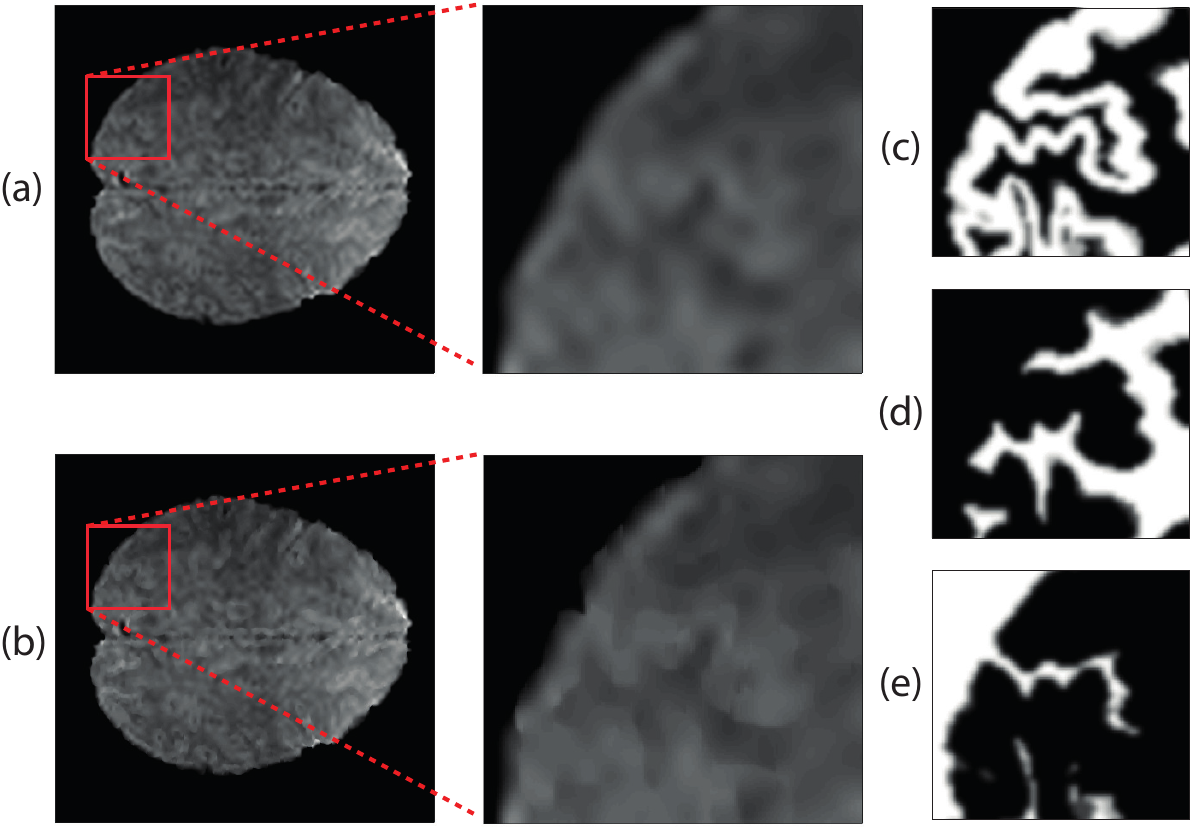} 
   \caption{(a) Upsampled slice of fMRI data using {\color{black} (a) BSI and (b) DIBSI}; a close up of an ROI is displayed. (c) Gray matter, (d) white matter and (e) cerebrospinal fluid segmented probability maps of the anatomy associated to the ROI shown in (a) and (b).}
   \label{fig:fmri_fullslice}
\end{figure}

A direct, separable extension of DIBSI to Euclidean m-D space can be formulated. We present results on interpolating the entire functional image shown in Fig.~\ref{fig:fmri}(a). In particular, we upsample the image by a factor of 10. For simplicity, the partial volume effect was disregarded and the value associated to each functional pixel was assigned to the center of the pixel. DIBSI was then performed along one dimension. The resulting upsampled values were then used for a second round of DIBSI along the second dimension. Figs.~\ref{fig:fmri_fullslice}(a) and (b) show upsampled versions of the functional slice using BSI and DIBSI, respectively; anatomical information associated to the ROI are shown in Figs.~\ref{fig:fmri_fullslice}(c)-(e). Overall, the upsampled image using DIBSI, cf. Fig.~\ref{fig:fmri_fullslice}(b), presents more details than that obtained using BSI, cf. Fig.~\ref{fig:fmri_fullslice}(a). For homogeneous parts of the domain, i.e., regions that fall purely within gray matter or white matter, BSI and DIBSI result in identical upsampled profiles. On the other hand, DIBSI results in finer detailed functional maps at inhomogeneous regions; for instance, see the maps along the sulcus (groove) shown in the ROI.     

It should be noted that the brain is intrinsically a 3-D structure, and that functional data are also acquired in 3-D. Therefore, extending the interpolation to 3-D and incorporating the association of the samples to the 3-D subdomain description can further enhance the results. Moreover, the exploited interpolation in Fig.~\ref{fig:fmri_fullslice} is a straightforward extension of DIBSI into 2-D to illustrate the use of such available domain information. A non-separable higher dimensional extension can provide more realistic encoding of convoluted domain descriptions such as that presented here for the fMRI setting. Such higher order extensions of DIBSI for neuroimaging applications will be explored in our future work. {\color{black} Finally, it is worth noting that the computational burden of DIBSI is extensive compared to BSI since it requires a domain-informed B-spline to be computed per sample point; in particular, the computational cost grows directly proportional to the number of samples that lie within or in adjacency of inhomogeneous intervals of the domain. Optimization of the implementation will be considered in future work.} 

\section{Conclusion}
We have proposed an interpolation scheme that incorporates a priori knowledge of the signal domain, such that the interpolant is consistent not only at sample points, with respect to the given samples, but also at intermediate points between samples, with respect to the given domain knowledge. The interpolation is formulated as an extension of B-spline interpolation. Shift invariant interpolation approaches that use generating functions other than B-splines may also be extended based on the scheme presented here. Results on simulated data showed reduced interpolation errors compared to using standard B-spline interpolation, on a range of B-spline orders and sampling steps. Results from applying the proposed approach on fMRI data demonstrated the potential of domain-informed interpolation in up-sampling low resolution functional brain data to obtain subtle activation patterns consistent to the anatomy of the brain.  

\section*{Appendix I}
In the following, we prove that the proposed domain-informed B-spline basis satisfies the three properties of a domain-informed generating basis.

\subsubsection*{Proof of property 1}
Let $\widehat{c}(e^{j\Omega})$ denote the discrete-domain Fourier transform of $c \in \ell_{2}$, i.e., $\widehat{c}(e^{j\Omega}) = \sum_{k\in \mathbb{Z}} c[k] e^{-j\Omega k}$ with the inverse transform given as
\begin{equation}
\label{eq:discreteFourierInverse}
c[k] = \frac{1}{2\pi}\int_{0}^{2\pi} \widehat{c}(e^{j\Omega})  e^{j\Omega k} d \Omega. 
\end{equation}
Let $\varphi_{k}(x) = \beta^{(n)}_{k}(x)$, 
and $\widehat{\varphi}_{k}(\omega) $ denote its continuous-domain Fourier transform given as 
\begin{align}
\widehat{\varphi}_{k}(\omega) 
& = \int_{-\infty}^{\infty} \varphi_{k}(x) e^{-j \omega x} \, d x.
\label{eq:contFourierTrans}
\end{align}
The objective is to derive bounds on the central term in (\ref{eq:rieszbasis}), which within this proof we denote by $s(x)$, i.e.,    
\begin{equation}
s(x) = \sum_{k\in \mathbb{Z}} c[k] \varphi_{k}(x-k).
\label{eq:centralterm}
\end{equation}
The continuous-domain Fourier transform of $s(x)$ is given as   
\begin{align}
\widehat{s}(\omega) 
& \stackrel{(\ref{eq:centralterm})}{=} \int_{-\infty}^{+\infty} \sum_{k\in\mathbb{Z}} c[k] \varphi_{k}(x-k) e^{-j\omega x} dx
\nonumber \\
& \stackrel{(\ref{eq:discreteFourierInverse})}{=}
\sum_{k\in \mathbb{Z}} \int_{0}^{2\pi} \widehat{c}(e^{j\Omega}) e^{ j\Omega k} \frac{d\Omega}{2\pi} \int_{-\infty}^{\infty} \varphi_{k}(x-k) e^{-j\omega x} dx
\nonumber \\
& = 
\int_{0}^{2\pi} \widehat{c}(e^{j\Omega}) \sum_{k\in \mathbb{Z}} \left(\int_{-\infty}^{\infty} \varphi_{k}(x-k) e^{-j\omega x} dx \right) e^{ j\Omega k} \, \frac{d\Omega}{2\pi},
\nonumber
\end{align}
which can be further reformulated by incorporating the change of variable $l = x-k$ and $d l = dx$ to 
\begin{align}
\widehat{s}(\omega) 
& =
\int_{0}^{2\pi} \widehat{c}(e^{j\Omega}) \sum_{k\in \mathbb{Z}} \left (\int_{-\infty}^{\infty} \varphi_{k}(l) e^{-j\omega l} dl \right) e^{-j\omega k} e^{j\Omega k} \, \frac{d\Omega}{2\pi} 
\nonumber \\
& \stackrel{(\ref{eq:contFourierTrans})}{=}
\frac{1}{2\pi}\int_{0}^{2\pi} \widehat{c}(e^{j\Omega}) \sum_{k\in \mathbb{Z}} \widehat{\varphi}_{k}(\omega) e^{-j\omega k} e^{j\Omega k} \, d\Omega
\nonumber \\
& = \frac{1}{2\pi}\int_{0}^{2\pi} \widehat{c}(e^{j\Omega}) {\color{black} \widehat{\Phi}(\omega,e^{j\Omega})} d\Omega, 
\label{eq:aux1}
\end{align}
where
\begin{equation}
{\color{black} \widehat{\Phi}(\omega,e^{j\Omega})} = \sum_{k\in \mathbb{Z}} \widehat{\varphi}_{k}(\omega) \, e^{-j \omega k} \, e^{j \Omega k}.
\label{eq:aux16}
\end{equation}

Invoking the $\LL$ space Parseval identity 
$
\| s \|_{\text{L}_{2}}^{2} = \frac{1}{2\pi} \int_{-\infty}^{\infty} \vert \widehat{s}(\omega) \vert^{2} d \omega
$
on (\ref{eq:aux1}) gives
\begin{align}
\|s \|^{2}_{L_{2}} 
& \stackrel{(\ref{eq:aux1})}{=} 
\frac{1}{(2\pi)^{3}}\int_{-\infty}^{\infty} \left \vert \int_{0}^{2\pi} \widehat{c}(e^{j\Omega}) \widehat{\Phi}(\omega,e^{j\Omega}) d\Omega \right \vert^{2} \diff\omega.   
\label{eq:aux3}
\end{align}
Using the Cauchy-Schwarz inequality, the integrand of the outer integral in (\ref{eq:aux3}) can be upper bounded as
\begin{align}
& \left \vert \int_{0}^{2\pi} \widehat{c}(e^{j\Omega}) \widehat{\Phi}(\omega,e^{j\Omega}) d\Omega \right \vert^{2} 
\nonumber
\\
& 
\le
\int_{0}^{2\pi} \vert \widehat{c}(e^{j\Omega}) \vert^{2} d\Omega \int_{0}^{2\pi} \vert \widehat{\Phi}^{\ast}(\omega,e^{j\Omega})  \vert^{2} d\Omega,
\nonumber
\\Ê
& 
= 
2 \pi \| c \|_{\ell_{2}}^{2} \int_{0}^{2\pi} \vert \widehat{\Phi}^{\ast}(\omega,e^{j\Omega})  \vert^{2} d\Omega,
\label{eq:cauchySchwarz}
\end{align}
where the equality follows from invoking the $\ell_{2}$ space Parseval identity 
$\| c \|_{\ell_{2}}^{2} = \frac{1}{2\pi} \int_{0}^{2\pi} \vert \widehat{c}(e^{j\Omega}) \vert^{2} d \Omega.$ 

{\color{black} Incorporating the outer integration and the constant factor in (\ref{eq:aux3}) on (\ref{eq:cauchySchwarz}) gives 
\begin{align}
& \frac{1}{(2\pi)^{3}}\int_{-\infty}^{\infty} \left \vert \int_{0}^{2\pi} \widehat{c}(e^{j\Omega}) \widehat{\Phi}(\omega,e^{j\Omega}) d\Omega \right \vert^{2} d\omega
\nonumber
\\
& 
\le 
\frac{1}{(2\pi)^{2}} \| c \|_{\ell_{2}}^{2}
\int_{-\infty}^{\infty}  
\int_{0}^{2\pi} 
\vert \widehat{\Phi}^{\ast}(\omega,e^{j\Omega})  \vert^{2} d\Omega \, d\omega
\nonumber
\\
&
=
\frac{1}{(2\pi)^{2}} \| c \|_{\ell_{2}}^{2}
\int_{0}^{2\pi} 
\underbrace{\int_{-\infty}^{\infty}  
\vert \widehat{\Phi}^{\ast}(\omega, e^{j\Omega})  \vert^{2} d\omega}_{=\| \widehat{\Phi}_{\Omega}(\omega) \|_{\text{L}_{2}}^{2}} \, d\Omega
\label{eq:integralOver2Pi}
\end{align}
where $\| \widehat{\Phi}_{\Omega}(\omega) \|_{\text{L}_{2}}^{2}$ denotes the $\text{L}_{2}$ norm of $\widehat{\Phi}(\omega,e^{j\Omega})$ over $\omega$ for a given fixed $\Omega$; note that $\forall \Omega \in [0,2\pi]$, $\| \widehat{\Phi}_{\Omega}(\omega) \|_{\text{L}_{2}}^{2}$ is bounded as 
\begin{align}
\| \widehat{\Phi}_{\Omega}(\omega) \|_{\text{L}_{2}}^{2}
& 
\stackrel{(\ref{eq:aux16})}{=}
\int_{-\infty}^{\infty}  
 \vert \sum_{k\in \mathbb{Z}} \widehat{\varphi}_{k}(\omega) \, e^{j \omega k} \, e^{-j \Omega k} \vert^{2} d\omega
 \nonumber
\\
& 
\le 
\int_{-\infty}^{\infty}  
 \sum_{k\in \mathbb{Z}} \vert \widehat{\varphi}_{k}(\omega) \, e^{j \omega k} \, e^{-j \Omega k} \vert^{2} d\omega
 \nonumber
\\
& 
= 
\sum_{k\in \mathbb{Z}} \int_{-\infty}^{\infty} \vert \widehat{\varphi}_{k}(\omega) \vert^{2} d\omega 
= 
2\pi \sum_{k\in \mathbb{Z}} \| \varphi_{k}(x) \|_{\text{L}_{2}}^{2},
\end{align}
where the summation runs over indices of the given finite set of signal samples; note that the norms $\| \varphi_{k}(x) \|_{\text{L}_{2}}$ exist since $\varphi_{k}(x) \in \text{L}_{2} $, which can be traced as follows
\begin{align}
\beta^{(n)}(x) \in \text{L}_{2} 
\nonumber
&
\stackrel{(\ref{eq:subdomains}),(\ref{eq:sdiSPL})}{\Longrightarrow} 
\beta^{(n)}_{k,j}(x) \in \text{L}_{2} 
\nonumber
\\
&
\stackrel{(\ref{eq:dominantkernel}),(\ref{eq:residualFunction})}{\Longrightarrow}
\dot{\beta}^{(n)}_{k}(x) \in \text{L}_{2}, \Omega(x) \in \text{L}_{2} 
\nonumber
\\
&
\stackrel{(\ref{eq:minorSPLs}),(\ref{eq:theta})}{\Longrightarrow} 
{\ddot{\beta}}^{(n)}_{k}(x)  \in \text{L}_{2} 
\nonumber
\\
&
~~\stackrel{(\ref{eq:displines})}{\Longrightarrow} 
\varphi_{k}(x) = \beta_{k}^{(n)}(x) \in \text{L}_{2}. 
\nonumber
\end{align}

Interpreting $\| \widehat{\Phi}_{\Omega}(\omega) \|_{\text{L}_{2}}^{2}$Êas function over $\Omega \in [0,2\pi]$, which is bounded and also nonnegative, its integral can be bounded by invoking the mean value theorem as 
\begin{align}
2\pi \inf_{\Omega\in[0,2\pi]}  \| \widehat{\Phi}_{\Omega}(\omega)  \|_{\text{L}_{2}}^{2} 
&
\le 
\int_{0}^{2\pi} 
\| \widehat{\Phi}_{\Omega}(\omega) \|_{\text{L}_{2}}^{2} d\Omega
\nonumber
\\
& \le 
2\pi \sup_{\Omega\in[0,2\pi]}   \| \widehat{\Phi}_{\Omega}(\omega)  \|_{\text{L}_{2}}^{2}.
\label{eq:boundsOnPhiNorm}
\end{align}
Incorporating (\ref{eq:aux3}) and the upper bound of (\ref{eq:boundsOnPhiNorm}) into (\ref{eq:integralOver2Pi}) gives
\begin{align}
\|s \|^{2}_{L_{2}} 
& 
\le
\underbrace{\frac{1}{2\pi} \sup_{\Omega\in[0,2\pi]}   \| \widehat{\Phi}_{\Omega}(\omega)  \|_{\text{L}_{2}}^{2}}_{B} \| c \|_{\ell_{2}}^{2},  
\end{align}
where $B$ denotes the required upper bound constant for declaring that $\{\varphi_{k}(x-k)\}_{k\in \mathbb{Z}}$ forms a Riesz basis, cf. (\ref{eq:rieszbasis}) and (\ref{eq:centralterm}). On the other hand, incorporating (\ref{eq:aux3}) and the lower bound of (\ref{eq:boundsOnPhiNorm}) into (\ref{eq:integralOver2Pi}) leads to one of the following to hold 
\begin{align}
& \overbrace{\frac{1}{2\pi} \inf_{\Omega\in[0,2\pi]}   \| \widehat{\Phi}_{\Omega}(\omega)  \|_{\text{L}_{2}}^{2}}^{A'} \| c \|_{\ell_{2}}^{2} \le \|s \|^{2}_{L_{2}} \le B \| c \|_{\ell_{2}}^{2}, 
\label{eq:scenario-a}
\\
& \|s \|^{2}_{L_{2}} < A' \| c \|_{\ell_{2}}^{2} \le B \| c \|_{\ell_{2}}^{2}. 
\label{eq:scenario-b}
\end{align}
 If (\ref{eq:scenario-a}) holds, then $A=A'$ is the required lower bound constant for declaring that $\{\varphi_{k}(x-k)\}_{k\in \mathbb{Z}}$ forms a Riesz basis. On the other hand, if (\ref{eq:scenario-b}) holds, since $A'$, $\| c \|_{\ell_{2}}^{2}$ and  $\|s \|^{2}_{L_{2}}$ are all strictly positive\footnote{We disregard the case where $||c||_{\ell_2}^{2} = 0$, since if $||c||_{\ell_2}^{2} = 0$, then $\forall k \in \mathbb{Z}, c[k] = 0$, then $\|s \|^{2}_{L_{2}}=0$, cf. (\ref{eq:centralterm}), and thus, the Riesz basis condition is satisfied $\forall A,B \in \mathbb{R}^{+}$, $A<B$.}, it can be stated that $\exists A \in (0,A')$ such that $A \| c \|_{\ell_{2}}^{2} \le \|s \|^{2}_{L_{2}}  < A' \| c \|_{\ell_{2}}^{2}\le B \| c \|_{\ell_{2}}^{2}$, thus, implicitly proving the existence of the lower bound constant $A$ and also providing an upper bound on A that is lower than B, i.e., $A<A^{'} \le B$.}

\subsubsection*{Proof of property 2}
We have, for all $x \in \mathbb{R}$ 
\begin{align}
& \sum_{k\in \mathbb{Z}} \beta^{(n)}_{k}(x-k)
\nonumber 
\\
& \stackrel{(\ref{eq:displines}),(\ref{eq:minorSPLs})}{=} 
\sum_{k\in \Delta^{(n)}_{x}} \dot{\beta}^{(n)}_{k}(x-k) + \sum_{k\in \Delta^{(n)}_{x}} \: \theta_{k}(x-k) \Omega(x)
\nonumber
\\
& \; \; \;  \stackrel{(\ref{eq:residualFunction})}{=}
\sum_{k\in \Delta^{(n)}_{x}} \sum_{i\in \mathcal{I}_{k}} \beta^{(n)}_{k,i}(x-k) 
\nonumber
\\ 
& \; \; \; \;  + \sum_{k\in \Delta^{(n)}_{x}} \theta_{k}(x-k) \sum_{l\in{\Delta^{(n)}_{x}}} \sum_{i\in \mathcal{R}_{k}}\beta^{(n)}_{l,i}(x-l)
\nonumber
\\
& \; \; \; \; =
\sum_{k\in \Delta^{(n)}_{x}}  \sum_{i\in \mathcal{I}_{k}} \beta^{(n)}_{k,i}(x-k) 
\nonumber
\\ 
& \; \; \; \; + \sum_{k\in{\Delta^{(n)}_{x}}} \sum_{i\in \mathcal{R}_{k}}\beta^{(n)}_{k,i}(x-k) \underbrace{\sum_{k\in \Delta^{(n)}_{x}} \theta_{k}(x-k)}_{=1}, 
\label{eq:pouTemp}
\\
&= \sum_{k\in \Delta^{(n)}_{x}}  \sum_{i\in \mathcal{J}} \beta^{(n)}_{k,i}(x-k) 
\stackrel{(\ref{eq:sisProp1}),(\ref{eq:sisProp2})}{=}1,
\label{eq:aux4}
\end{align}
where the partition of unity in (\ref{eq:pouTemp}) follows from  
\begin{align}
\sum_{k\in \Delta^{(n)}_{x}} \theta_{k}(x-k) 
& =
\displaystyle{\sum_{k\in \Delta^{(n)}_{x}}  \frac{\Theta\left(w_{k}(x-k)\right )}{\sum_{l\in{\Delta^{(n)}_{k+x-k}}} \Theta\left (w_{l}(k+x-k-l)\right)}}
\nonumber
\\
& 
= 1 .
\nonumber 
\end{align}

\subsubsection*{Proof of property 3}
For any $k\in\mathbb{Z}$, if the domain is homogeneous within the support $[ k-\Delta^{(n)}, k+\Delta^{(n)} ]$, i.e., there exists an $l \in \mathcal{J}$ such that for all $\vert x-k \vert \le \Delta^{(n)}$ we have $d_{l}(x) = 1$ and $\{d_{j}(x) = 0\}_{j\in \mathcal{J}\setminus l}$, then 
\begin{align}
\beta^{(n)}_{k}(x) 
& = \dot{\beta}^{(n)}_{k}(x) + \ddot{\beta}^{(n)}_{k}(x) 
\nonumber \\
& = \overbrace{d_{l}((x+k)T)}^{=1} \cdot \beta^{(n)}(x) 
+ \theta_{k}(x) \cdot \Omega(\overbrace{(x+k)}^{:=x_{k}}T), 
\nonumber \\
& \stackrel{(\ref{eq:residualFunction2})}{=}  \beta^{(n)}(x)
+ \theta_{k}(x) \left (1 -  \sum_{i\in{\Delta^{(n)}_{x_{k}T}}} \dot{\beta}^{(n)}_{i}(x_{k}-i)\right) 
\nonumber \\
& =  \beta^{(n)}(x)
+ \theta_{k}(x) \left (1 -  \overbrace{\sum_{i\in{\Delta^{(n)}_{x_{k}T}}} \beta^{(n)}(x_{k}-i)}^{=1}\right) 
\nonumber \\
& =  \beta^{(n)}(x)
\nonumber \\
\end{align}

\section*{Appendix II}
\label{sec:appendix_umt}
We define a Meyer-type system of $K\in \mathbb{Z}^{+}$ kernels $\{m^{'}_{k}(x): [L,U] \rightarrow [0,1]\}_{k=1}^{K}$, where $L, U\in \mathbb{R}^{+}$ and $L<U/(2K)$, in a way similar to that given in \cite{Behjat2016}, with the difference that i) we enforce the kernels to form a partition of unity as   
\begin{equation}
\label{eq:pouMeyer}
\sum_{k=1}^{K} \vert m^{'}_{k}(x) \vert   =1, \quad \forall x \in [L,U],
\end{equation}
as opposed to over their second power as in \cite{Behjat2016}, and ii) we skip enforcing the first and last kernels having $\text{L}_{2}$ norms equal to that of $\{m^{'}_{k}(x)\}_{k=2}^{K-1}$.

Using the auxiliary function of the Meyer wavelet \cite{Meyer} $\nu(x) = x^4(35-84x+70x^2-20x^3)$, a system $K \ge 2$ kernels $\{m^{'}_{k}(x)\}_{k=1}^{K}$ that span the range $[0,U]$ can be defined as 
\begin{subequations}
\label{eq:uniformFrame}
\begin{align}
m^{'}_{1}(x) & = 
\begin{cases}
\label{eq:kernelFirst}
1 & \quad \: \forall x  \in [L,\Delta/2]  \\
\cos^{2}(\frac{\pi}{2} \nu (\frac{x}{\Delta}-\frac{1}{2}))) & \quad \: \forall x \in (\Delta/2,3\Delta/2]  \\
0 & \quad \: \emph{elsewhere}
\end{cases}
\\ m^{'}_{k}(x) & = 
\begin{cases}
\label{eq:kernelMiddle}
\sin^{2}(\frac{\pi}{2} \nu (\frac{x}{\Delta}- k +\frac{3}{2}))  &  \forall x \in (\alpha,\alpha+\Delta]    \\
\cos^{2}(\frac{\pi}{2} \nu (\frac{x}{\Delta}- k+\frac{1}{2})) &  \forall x \in (\alpha+\Delta,\alpha+2\Delta]  \\
0 & \emph{elsewhere}
\end{cases}
\\ m^{'}_{K}(x) & = 
\begin{cases}
\label{eq:kernelLast}
\sin^{2}(\frac{\pi}{2} \nu (\frac{x}{\Delta}- K+\frac{3}{2}))  &  \forall x \in (\kappa,\kappa+\Delta]    \\
1 &  \forall x \in (\kappa+\Delta,U]  \\
0 & \emph{elsewhere}
\end{cases}
\end{align}
\end{subequations}
where $\Delta = U/K$ (see Fig.~\ref{fig:umt_notation} for notation), $\alpha  =(k-3/2) \Delta$ and $\kappa = (K-3/2) \Delta$. The set of kernels (\ref{eq:uniformFrame}) satisfy (\ref{eq:pouMeyer}) since 
\begin{align*}
\sum_{k=1}^{K} |m^{'}_{k}(\lambda)| 
& \: \: = \: \:
\begin{cases}
m^{'}_{1}(x) \: \stackrel{(\ref{eq:kernelFirst})}{=} 1 & \forall x \in [L,\Delta/2]
\\  m^{'}_{1}(x) + m^{'}_{2}(x) & \forall x \in (\Delta/2,3\Delta/2]
\\  m^{'}_{2}(x) + m^{'}_{3}(x) & \forall x \in (3\Delta/2,5\Delta/2]
\\ \vdots & \vdots
\\ m^{'}_{J}(x)\: \stackrel{(\ref{eq:kernelLast})}{=} 1 & \forall x \in (U-\Delta/2,U]
\end{cases} 
\\
& \stackrel{(\ref{eq:kernelMiddle})}{=} 
\begin{cases}
1 & \forall x \in [L,\Delta/2]
\\ \cos^{2}(x_{\text{I}}) + \sin^{2}(x_{\text{I}}) & \forall x \in (\Delta/2,3\Delta/2]
\\ \cos^{2}(x_{\text{II}}) + \sin^{2}(x_{\text{II}}) & \forall x \in (3\Delta/2,5\Delta/2]
\\ \vdots & \vdots
\\ 1 & \forall x \in (U-\Delta/2,U]
\end{cases} \nonumber \\
& \: \: = 1  \quad \forall x \in [0,U]
\end{align*} 
where $x_{\text{I}} = \frac{\pi}{2} \nu (\frac{x}{\Delta}-1/2) $ and $x_{\text{II}} = \frac{\pi}{2} \nu (\frac{x}{\Delta}-3/2)$; in the first equality we use the property that for all $k \in \{2,\ldots,K-1\}$ the supports of $m^{'}_{k-1}(x)$ and $m^{'}_{k+1}(x)$ are disjoint.

\begin{figure}[]
\centering
\includegraphics[width=.48\textwidth]{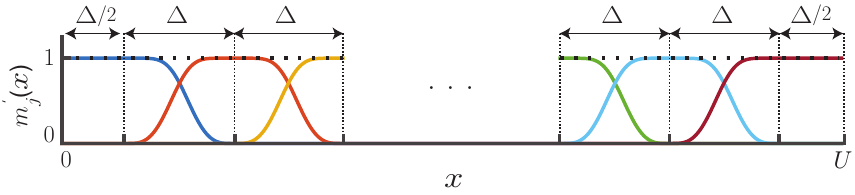} 
\caption[]{Meyer-type system of spectral kernels.}
\label{fig:umt_notation}
\end{figure}

\bibliographystyle{IEEEtran}
\bibliography{hbehjat_bibligraphy}
\end{document}